\newcommand{\blue}{}
\newcommand{\red}{}
\newcommand{\magenta}{}
\newcommand{\black}{}
\newtheorem{theorem}{Theorem}
\newtheorem{corollary}{Corollary}
\newtheorem{definition}{Definition}
\newtheorem{example}{Example}
\newtheorem{lemma}{Lemma}
\newtheorem{proposition}{Proposition}
\newenvironment{proof}[1][Proof]{\noindent\textbf{#1.} }{\ $\Box$} 
\newcommand{\Tr}{\mbox{Tr}}
\renewcommand{\FR}[2]{\stackrel{\longleftrightarrow}{#1 #2}}
\renewcommand{\H}{{\bf H}}
\newcommand{\K}{{\bf K}}
\newcommand{\beq}{\begin{equation}}
\newcommand{\eeq}{\end{equation}}
\newcommand{\A}{{\mathfrak A}}
\newcommand{\B}{{\mathfrak B}}
\newcommand{\E}{{\cal E}}
\newcommand{\interior}{\mathrm{int~}}
\newcommand{\cone}{\mathrm{cone}}
\newcommand{\aff}{\mathrm{aff}}
\newcommand{\conv}{\mathrm{conv}}
\newcommand{\lin}{\mathrm{lin}}
\newcommand{\intersect}{\cap}
\begin{document}

\title{Entropy and Information Causality in General Probabilistic Theories}

\date{September 28, 2009}

\author{Howard Barnum} \email{hbarnum@perimeterinstitute.ca} \affiliation{Perimeter Institute for Theoretical Physics, 31 Caroline St N, Waterloo, Ontario, N2L 2Y5 Canada}

\author{Jonathan Barrett} \email{j.barrett@bristol.ac.uk}
\affiliation{H.~H. Wills Physics Laboratory, University of Bristol, Tyndall Avenue,
Bristol, BS8 1TL United Kingdom}

\author{Lisa Orloff Clark}\email{clarklisa@susq.edu}
\affiliation{Department of Mathematical Sciences, Susquehanna University, Selinsgrove, PA, 17870 USA}

\author{Matthew Leifer} \email{matt@mattleifer.info}
\affiliation{Perimeter Institute for Theoretical Physics, 31 Caroline St N, Waterloo, Ontario, N2L 2Y5 Canada}
\affiliation{Institute for Quantum Computing, University of Waterloo, 200 University
Ave. W, Waterloo, Ontario, N2L 3G1 Canada}

\author{Robert Spekkens}\email{rspekkens@perimeterinstitute.ca}
\affiliation{Perimeter Institute for Theoretical Physics, 31 Caroline St N, Waterloo, Ontario, N2L 2Y5 Canada}

\author{Nicholas Stepanik}\email{stepanik@susqu.edu}
\affiliation{Department of Mathematical Sciences, Susquehanna University, Selinsgrove, PA, 17870 USA}

\author{Alex Wilce} \email{wilce@susqu.edu} \affiliation{Department of Mathematical Sciences, Susquehanna University, Selinsgrove, PA, 17870 USA}

\author{Robin Wilke}\email{rwilke@uvm.edu}\affiliation{Department of Mathematics and Statistics, University of Vermont, Burlington, VT 05405 USA}

\begin{abstract}

We investigate the concept of entropy in probabilistic theories more
general than quantum mechanics, with particular reference to the
notion of information causality recently proposed by Pawlowski {\em et
  al.} (arXiv:0905.2992). We consider two entropic quantities, which
we term {\em measurement} and {\em mixing} entropy. In the context of
classical and quantum theory, these coincide, being given by the
Shannon and von Neumann entropies respectively; in general, however,
they are very different. In particular, while measurement entropy is
easily seen to be concave, mixing entropy need not be. In fact, as we
show, mixing entropy is not concave whenever the state space is a
non-simplicial polytope. Thus, the condition that measurement and
mixing entropies coincide is a strong constraint on possible
theories. We call theories with this property {\em monoentropic}.

Measurement entropy is subadditive, but not in general strongly
subadditive. Equivalently, if we define the mutual information between
two systems $A$ and $B$ by the usual formula $I(A:B) = H(A) + H(B) -
H(AB)$ where $H$ denotes the measurement entropy and $AB$ is a
non-signaling composite of $A$ and $B$, then it can happen that
$I(A:BC) < I(A:B)$.  This is relevant to information
causality in the sense of Pawlowski {\em et al.}: we show that any
monoentropic non-signaling theory in which measurement entropy is
strongly subadditive, and also satisfies a version of the Holevo
bound, is informationally causal, and on the other hand we observe
that Popescu-Rohrlich boxes, which violate information causality, also
violate strong subadditivity.  We also explore the interplay between
measurement and mixing entropy and various natural conditions on theories
that arise in quantum axiomatics.
\end{abstract}

\maketitle

\section{Introduction}

One can view quantum mechanics as an extension of the classical
probability calculus, allowing for random variables that are not
simultaneously measurable. In order to gain a clearer understanding of
quantum theory from this perspective, it is useful to contrast it with
various (factitious) alternatives that are neither classical nor
quantum. The best known example of such a ``foil"  probabilistic
theory is probably the theory of ``non-local boxes"
\cite{PopescuRohrlich, Barrett, SB}; but in fact, there is a standard mathematical
framework for such theories, going back to the work of Mackey in the
1950s \cite{Mackey}. Working in this framework, one can show that many
phenomena commonly regarded as characteristically quantum --
no-cloning and no-broadcasting theorems \cite{BBLW06, BBLW07}, the
trade-off between state disturbance and measurement \cite{Barrett},
and the existence and basic properties of entangled states \cite{Klay,
  Klay88, Barrett, BBLW06} -- are actually quite generic features of
all non-classical probabilistic theories satisfying a basic
``non-signaling" constraint. Other quantum phenomena, such as the
possibility of teleportation \cite{BBLW08} or remote steering of
ensembles \cite{BGW}, are more special (and in some sense, more {\em
  classical}), but can still be seen to arise outside the boundaries
of quantum theory.

One might hope to find some reasonably short list of probabilistic or
information-theoretic phenomena that more cleanly separate quantum
theory from other possible non-signaling theories. In a recent paper
\cite{Petal}, Pawlowski {\em et al.} take a step in this direction by
showing that any non-signaling correlation violating the Tsirel'son
bound also violates a qualitative information-theoretic principle they
call {\em information causality} (IC). In essence, this prohibits a form of
``multiplexing" in which one party (Bob) gains the ability to access a
total of more than $m$ bits of information held by another party
(Alice), on the basis of an $m$-bit message from Alice, plus some
shared non-signaling bipartite state. It is also established in
\cite{Petal} that quantum mechanics -- and hence, also classical
probability theory -- satisfies this IC constraint.

In establishing that quantum mechanics satisfies IC, Pawlowski {\em et
  al.}  make use only of standard formal properties of the von Neumann
entropy of joint quantum states. This raises the obvious question of
where their proof breaks down in other contexts (e.g., a PR box) in
which IC fails.  In order to address this question, we develop some of
the basic machinery of entropy, conditional entropy and mutual
information in a very general probabilistic setting --- an
independently interesting problem, which seems not to have received
much previous attention (an exception being the paper \cite{Hein} of
Hein).

We begin by identifying two notions of entropy, which we call {\em
  measurement} and {\em mixing} entropy, and which we
denote respectively by $H(A)$ and $S(A)$, where $A$ is a general
probabilistic model. Briefly: the measurement entropy of a system is
the minimum Shannon entropy of any possible measurement thereon, while
the mixing entropy is the infimum of the Shannon entropies of the
various ways of preparing the system's state as a mixture of pure
states. These coincide classically and in quantum theory, but are
generally quite different animals. For example, measurement entropy is
always subadditive, and is concave; mixing entropy is generally
neither. In fact, in Appendix A, we show that there are {\em always}
violations of concavity of the mixing entropy for any system with a state
space that is a non-simplicial polytope.  Thus, the condition that mixing and
measurement entropies {\em do} coincide, as in quantum mechanics, is a
powerful constraint on the structure of a probabilistic theory. We
call theories with this feature {\em monoentropic}.

Next, we develop an account of joint measurement entropy, conditional entropy, and mutual information for composite systems,
and apply this apparatus to the notion of information causality given in \cite{Petal}.
Somewhat surprisingly, it seems that the main issue is not so much one of the strength of non-local correlations,
but rather, the failure, of two other, very basic principles. One is the strong subadditivity, or, equivalently,
the condition that the mutual information, defined by $I(A:B) = H(A) + H(B) - H(AB)$, satisfy
\[I(A:BC) \geq I(A:C).\]
This holds both classically and in quantum theory, but is violated
in very simple non-classical models -- even models in which $A$
and $B$ are {\em classical}, so that no issue of non-locality can
arise. Another basic principle, equivalent to the Holevo bound, is that $I(E:B) \leq I(A:B)$ where
$E$ is any particular measurement on system $A$.

Both strong subadditivity and the Holevo bound can be viewed as
special cases of an even more basic principle, usually called the {\em
  data processing inequality}. This asserts that, for any systems $A,
B$ and $B'$, and any reasonable process $\E : B \rightarrow B'$, we
have $I(A:\E(B)) \leq I(A:B)$ (where $\E(B) := B'$ is the output system of
the process). This is intuitively appealing as a basic physical
postulate.

Finally, we apply the apparatus just described to the notion of
information causality. We consider in detail the basic example, due to
van Dam \cite{vanDam} of an IC-violating composite system, and find
that it exhibits a violation of strong subadditivity.  We also
establish that, within a very broad class of finite-dimensional {\em
  monoentropic} theories, strong subadditivity together with the
Holevo bound entail information causality. It remains an open question
whether all three of these conditions are necessary for this
conclusion.

The remainder of this paper is organized as follows. In Section II, we
review in some detail the framework of generalized probability theory,
largely following \cite{BBLW06}.  In section III, we define, and
establish some elementary properties of, measurement and mixing
entropy for states of an arbitrary probabilistic model. Section IV
discusses composite systems in our framework, and collects some
observations about the behavior of joint measurement entropy, and the
notion of mutual information based on this.  Using this apparatus, we
establish in Section V that any monoentropic probabilistic theory in
which measurement entropy is strongly subadditive and satisfies the
Holevo bound, is informationally causal in the sense of \cite{Petal}.
We also point out that violations of strong subadditivity are possible
in theories having no entanglement. Section VI collects some final
remarks and open questions.  Appendix A contains the proof that mixing
entropy is not concave on state spaces that are non-simplicial
polytopes.  Appendix B establishes some further properties of
monoentropic theories, relevant to axiomatic characterizations of
quantum theories, and also shows that monoentropicity follows from two
other properties, steering and pure conditioning, the physical content
of which may be more transparent.  Finally in Appendix
\ref{appendix:linearization} we discuss how the framework of this
paper relates to the ``convex sets'' framework, and consider analogous
definitions of measurement entropy in that context.

\section{General Probabilistic Models}\label{testspaces}

As we mentioned above, there is a more or less standard mathematical
framework for discussing general probabilistic models, going back at
least to the work of Mackey in the 1950s, and further developed (or,
in some cases, rediscovered) in succeeding decades by various authors
\cite{DaviesLewis, Edwards, Ludwig, FR, Hardy, Barrett}.  In what
follows, we work in the idiom of \cite{BL}, which we briefly recall.

We characterize a probabilistic model, or, more briefly, a {\em
  system}, by a pair $A = ({\mathfrak A},\Omega)$ where $\mathfrak A$
is a collection -- possibly infinite -- of discrete classical
experiments or measurements, and $\Omega$ is a set of {\em states}. We
make the following assumptions:
\begin{itemize}
\item[(i)] Every experiment in $\mathfrak A$ is defined by its set of possible outcomes, so that
we may represent $\mathfrak A$, mathematically, as a collection of sets $E, F, ...$. In the language of \cite{FR,Wilce09}, this is a {\em test space}; accordingly, we refer to the various sets $E, F, ... \in {\mathfrak A}$ as {\em tests}.
\item[(ii)] Every state is entirely determined by the probabilities
it assigns to the outcomes of the various measurements in $\mathfrak A$. Thus, letting $X := \bigcup {\mathfrak A}$ denote the total {\em outcome space} of $\mathfrak A$, $\Omega$ consists of functions $\alpha : X \rightarrow [0,1]$, with $\sum_{x \in E} \alpha(x) = 1$ for every set $E \in {\mathfrak A}$.
\item[(iii)] The state space $\Omega$ is a convex subset of $[0,1]^{X}$ (the functions from $X$ to $[0,1]$). Hence any statistical mixture of states is a state.
\end{itemize}
For a given test space ${\mathfrak A}$, one can define the space of \emph{all} states on ${\mathfrak A}$.  This is called the \emph{maximal} state space and is denoted by  $\Omega({\mathfrak A})$. It is clearly convex.  The \emph{physical} state space $\Omega$ is necessarily either equal to or a subset of the maximal state space.

This framework, though very simple, is broad enough to accommodate both measure-theoretic classical probability
theory and non-commutative probability theory based on von Neumann algebras.\footnote{Measure-theoretic
classical probability theory is, in effect, the theory of systems of
the form $({\mathfrak D},\Theta)$ where ${\mathfrak D} = {\mathfrak
D}(S,\Sigma)$ is the set of all finite (respectively, countable)
partitions $E = \{a_i\}$ of a measurable space $S$ by non-empty
mesurable sets $a_i \in \Sigma$, and where $\Theta$ is some closed
convex set of probability measures on $E$. The probabilistic
apparatus of states and observables associated with von Neumann
algebras can be modeled in a similar way.}
In this paper, we shall be interested exclusively in discrete, finite-dimensional systems. Accordingly, from this point forward, we make
the standing assumptions that (i)
${\mathfrak A}$ is {\em locally finite}, meaning that all tests $E \in {\mathfrak A}$ are finite sets \footnote{Alternatively, this condition could be derived from some other mild conditions on test spaces, as discussed in Appendix~\ref{monoentropic}}, and (ii) $\Omega$ is finite dimensional and
closed.

As is easily checked, local finiteness guarantees that the maximal
state space $\Omega({\frak A})$ is compact; thus, the closedness of
the physical statespace $\Omega$ insures that it, too, is
compact.\footnote{By \cite{Shultz}, any compact convex set can be
  represented as the full state space $\Omega({\mathfrak A})$ of some
  locally finite test space $\mathfrak A$.} It follows that every
state can be represented as a finite convex combination, or mixture,
of {\em pure states}, that is, extreme points of $\Omega$.  %

We now consider several examples. For us, a {\em classical system}
corresponds to a pair $(\{E\},\Delta(E))$ where the test space $\{E\}$
consists of a single measurement and where $\Delta(E)$ denotes the
entire simplex of probability weights on $E$. In other words, there is
just one test and any probability distribution over the outcomes is a
possible state.  A quantum system corresponds to $({\mathfrak
  F}(\H),\Omega(\H))$, where ${\mathfrak F}(\H)$ is the set of
(unordered) orthonormal bases on a complex Hilbert space $\H$ and
$\Omega(\H)$ is the set of density operators.\footnote{To be a bit
  more precise: a quantum state is the quadratic form associated with
  a density operator. We shall routinely identify a density operator
  $\rho$ with its quadratic form, writing $\rho(x)$ for $\langle \rho
  x, x \rangle$ where $x$ is a unit vector on $\H$.}

A simple example that is neither classical nor quantum, and to which we shall
refer often, is the ``two-bit" test space ${\mathfrak A}_2 =
\{\{a,a'\},\{b,b'\}\}$, consisting of a pair of two-outcome tests,
depicted in Fig.~\ref{fig:squittestspace}. The full state space
$\Omega({\mathfrak A}_2)$ is isomorphic to the unit square $[0,1]^2$
under the map $\alpha \mapsto (\alpha(a), \alpha(b))$ and is depicted
if Fig.~\ref{fig:squit}. Accordingly, we shall call a system of this
form a {\em square bit} or {\em squit}. A PR box is a particular
entangled state of two squits, as discussed below in Section
\ref{subsec:vanDam}.
\begin{figure}[ht]
\includegraphics[scale=0.6]{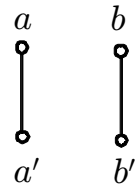}
\caption{The ``two-bit'' test space ${\mathfrak A}_2 = \{\{a,a'\},\{b,b'\}\}$.   It is depicted using a \emph{Greechie diagram}, wherein vertices denote outcomes and every smooth line through a set of vertices represents a test.}
\label{fig:squittestspace}
\end{figure}
\begin{figure}[H]
\centering
\includegraphics[scale=0.6]{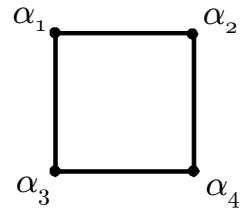}
\caption{The squit state space $\Omega({\mathfrak A}_2)$.  The pure state $\alpha_1$ yields the outcome $a$ in test $\{a,a'\}$ and the outcome $b$ in $\{b,b'\}$, that is, $\alpha_1(a)=1$, $\alpha_1(a') =0$, and $\alpha_1(b)=1$, $\alpha_1(b') =0$.  Similarly, $\alpha_2$, $\alpha_3$ and $\alpha_4$ yield the pairs of outcomes $a,b'$, $a',b$ and $a',b'$ respectively.}
\label{fig:squit}
\end{figure}

\section{Measurement and Mixing Entropies}\label{measurementandmixing}

Let $\H$ be a finite-dimensional Hilbert space, representing a quantum system. The von Neumann entropy of a state $\rho$ on this system is defined as $-\Tr(\rho \log\rho)$, where here and elsewhere, logarithms have base $2$. Equivalently, it is the Shannon entropy of the coefficients $\lambda_i$ in the spectral decomposition $\rho = \sum_{i} \lambda_i P_i$ (where the $P_i$ are $\rho$'s rank-one eigenprojections). In effect, the spectral decomposition privileges a particular
convex decomposition of the state, and (up to phases) a privileged test in ${\mathfrak F}(\H)$. In our much more general setting, where we have nothing like a spectral theorem, how might we define the entropy of a state? The following definitions suggest themselves.

\begin{definition} Let $\alpha$ be a state on ${\mathfrak A}$. For each test $E \in
{\mathfrak A}$, define the {\em local measurement entropy} of $\alpha$
at $E$, $H_{E}(\alpha)$, to be the classical (Shannon) entropy of
$\alpha|_{E}$, i.e.,
\[H_{E}(\alpha) := - \sum_{x \in E} \alpha(x) \log(\alpha(x)).\]
The {\em measurement entropy} of $\alpha$, $H(\alpha)$, is the
infimum of $H_{E}(\alpha)$ as $E$ ranges over ${\mathfrak
A}$,
i.e.,
\[H(\alpha) := \inf_{E \in {\mathfrak
A}} H_{E}(\alpha).\]
\end{definition}
Note that the measurement entropy of a state of $A = ({\mathfrak A}, \Omega)$ depends entirely on the structure of ${\mathfrak A}$, and is independent of the choice of state space $\Omega$. It will often be convenient to write $H(\alpha)$ as $H(A)$, where context makes clear which state is being considered.

For the remainder of this paper we make, and shall make free use of, the assumption that the measurement entropy of a state is actually achieved on some test, i.e., that $H(\alpha) = H_{E}(\alpha)$ for some $E \in {\mathfrak A}$. This is the case in quantum theory, and can be shown to hold much more generally, given some rather weak analytic requirements on an abstract model $({\mathfrak A},\Omega)$ -- for details, see Appendix~\ref{monoentropic}. It follows that $H(\alpha)=0$ if and only if there is a test such that $\alpha$ assigns probability $1$ to one of its outcomes.

\begin{definition} Let $\alpha$ be a state on ${\mathfrak A}$. The {\em
mixing} (or {\em preparation}) entropy for $\alpha$, denoted
$S(\alpha)$, is the infimum of the classical (Shannon) entropy
$H(p_1,...,p_n)$ over all finite convex decompositions $\alpha =
\sum_i p_i \alpha_i$ with $\alpha_i$ pure.
\end{definition}
Again, we write $S(A)$ for $S(\alpha)$ where $\alpha$ belongs to the state space $\Omega$ of a system $A = ({\mathfrak A},\Omega)$. In contrast to measurement entropy, the mixing entropy of a state depends only on the geometry of the state space $\Omega$, and is independent of the choice of test space ${\mathfrak A}$. The mixing entropy of a pure state is $0$.

Trivially, in classical probability theory, measurement and mixing
entropies coincide, both being simply the Shannon entropy. Much less
trivially, measurement and mixing entropies also coincide in quantum
theory, where they equal the von Neumann entropy.\footnote{A proof that the von Neumann entropy minimizes mixing entropy can be found in
\cite{BengtssonZyckowski}. The key observation is that the
mixing coefficients for any ensemble for $\rho$ can be obtained from
the eigenvalues of $\rho$ by a doubly stochastic matrix
(Shrodinger's Lemma), which can only increase entropy. An easier
version of the same argument (also noted by Hein \cite{Hein}) shows that the spectral decomposition
also minimizes measurement entropy.}  As the following example shows, however, measurement and mixing entropies can be quite different. 
\begin{example} (The firefly model \footnote{This example, well-known in the quantum-logical literature, has a fairly concrete interpretation in terms of a firefly in a three-chambered triangular box. See \cite{Wilce09} for details.}) Let $\mathfrak A = \{\{a,x,b\},\{b,y,c\},\{c,z,a\}\}$. This test space is depicted in Fig.~\ref{fig:firefly}. One can
check that $\Omega({\mathfrak A})$ has five pure states, one of which is
given by $\alpha(a) = \alpha(b) = \alpha(c) = 1/2, \alpha(x) =
\alpha(y) = \alpha(z) = 0$: since this is pure, $S(\alpha) = 0$, yet
$H(\alpha) = 1$. On the other hand, consider the pure states $\beta$
and $\gamma$ determined by $\beta(b) = \beta(z) = 1$ and $\gamma(x)
= \gamma(y) = \gamma(z) = 1$: their average, $\omega := 1/2\beta +
1/2\gamma$ has mixing entropy $S(\omega) = 1$. This follows from the fact that the only convex decomposition of $\omega$ into pure states is into $\beta$ and $\gamma$, which in turn follows from the fact that these are the only pure states that assign probability one to $z$. On the other hand, $\omega(z) = 1$,
so $H(\omega) = 0$.
\end{example}
\begin{figure}[H]
\centering
\includegraphics[scale=0.6]{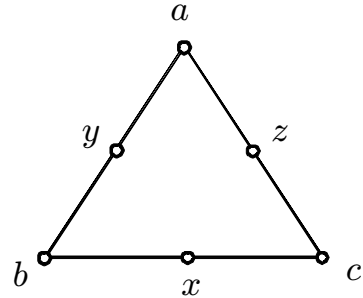}
\caption{The Greechie diagram for the test space of the firefly model.}
\label{fig:firefly}
\end{figure}

Even in the general case, measurement entropy is quite well behaved. For example, it is easy to see that $H(\alpha)$ is continous as a function of $\alpha$. Further,
\begin{theorem}
Measurement entropy is concave, i.e., if $\sum_i t_i \alpha_i$ is a convex combination of states $\alpha_i$ on $\mathcal{A}$, then
\begin{equation}
H\left( \sum_i t_i \alpha_i \right) \geq \sum_i t_i H \left( \alpha_i \right) .
\end{equation}
\end{theorem}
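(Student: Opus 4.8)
The plan is to reduce the statement to the classical fact that the Shannon entropy $H(p_1,\dots,p_n) = -\sum_j p_j \log p_j$ is a concave function of the probability vector $(p_1,\dots,p_n)$ on the simplex $\Delta(E)$. This is standard: it follows, for instance, from Jensen's inequality applied to the concave function $t \mapsto -t\log t$, or equivalently from non-negativity of the relative entropy. I would simply invoke it.

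First I would fix a single test $E \in {\mathfrak A}$ and note that the restriction map $\alpha \mapsto \alpha|_E$ from $\Omega$ into $\Delta(E)$ is affine: since $\Omega \subseteq [0,1]^{X}$ carries the pointwise convex structure, $\left(\sum_i t_i \alpha_i\right)\big|_E = \sum_i t_i\left(\alpha_i|_E\right)$ as probability weights on the finite set $E$. Applying concavity of the Shannon entropy to the distributions $\alpha_i|_E$ then gives
\[ H_E\!\left( \sum_i t_i \alpha_i \right) \;\geq\; \sum_i t_i\, H_E(\alpha_i), \]
i.e., the local measurement entropy $H_E$ is concave on $\Omega$ for each fixed $E$.

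Next I would pass from the local entropies to $H$ itself. For every test $E$ and every index $i$ we have $H_E(\alpha_i) \geq H(\alpha_i)$ by definition of $H$ as an infimum, so the displayed inequality yields $H_E\!\left( \sum_i t_i \alpha_i \right) \geq \sum_i t_i\, H(\alpha_i)$ for \emph{every} $E \in {\mathfrak A}$. Taking the infimum over $E$ on the left-hand side gives exactly $H\!\left( \sum_i t_i \alpha_i \right) \geq \sum_i t_i\, H(\alpha_i)$, which is the claim. Note that this argument does not actually require the standing assumption that the infimum defining $H$ is attained; an infimum of concave functions need not be concave in general, but here the right-hand bound is already independent of $E$, so it survives the infimum.

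There is essentially no serious obstacle here: the only nontrivial input is concavity of the Shannon entropy, which is classical, plus the trivial affineness of the restriction maps. The one point worth stating with care is the direction of the inequalities — we bound the infimum over tests of the mixed state's local entropies \emph{below} by a quantity that no longer depends on the test, which is precisely why taking the infimum does no damage; the analogous manipulation with an infimum appearing on the right-hand side would fail, and indeed this asymmetry is exactly why mixing entropy (Appendix A) is \emph{not} concave in general.
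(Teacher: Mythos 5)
Your proof is correct and follows essentially the same route as the paper's: concavity of the local entropies $H_E$ (from concavity of Shannon entropy and affineness of restriction), followed by bounding $H_E(\alpha_i)$ below by $H(\alpha_i)$ and taking the infimum over tests. Your additional remarks on why the infimum causes no trouble and why attainment is not needed are accurate but do not change the argument.
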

\begin{proof}
Since for each test $E$ the local entropy $H_{E}$ is concave,
\begin{eqnarray*}
 H\left(\sum_i t_i \alpha_i\right)  & = &  \inf_{E} H_{E}\left(\sum_i t_i \alpha_i\right) \\
 & \geq &  \inf \sum_i t_i H_{E}(\alpha_i) \geq \sum_i t_i
H(\alpha_i).
\end{eqnarray*}
\end{proof}

Mixing entropy is, by contrast, a curious beast. The following example shows that it need not be continuous as a function of the state.
\begin{example}\label{noncontinuousexample}
Let $\Omega \subseteq {\mathbb R}^{3}$ be the convex hull of the circle $C = \{(x,y,0) | x^2 + y^2 = 1\}$ and the line
segment $I = \{ (1,0,t) | -1 \leq t \leq 1 \}$ (Figure~\ref{noncontinuousfigure}). Let $\alpha$ denote the point of intersection of $I$ and $C$, i.e., the point $(1,0,0)$. The extreme points of this set are evidently the endpoints of $I$, together with the
points of $C \setminus \{ \alpha \}$. Note that $\alpha$ has a unique decomposition as a mixture of
extreme points of $\Omega$, namely, as an equal mixture of the endpoints of $I$. Thus, $S( \alpha ) = 1$. On the other hand, $\alpha$ can be approached as closely as we like by extreme points belonging to $C \setminus \{ \alpha \}$, which have mixing entropy $0$. The mixing entropy is therefore discontinuous at $\alpha$.
\end{example}
\begin{figure}[H]
\centering
\includegraphics[scale=0.6]{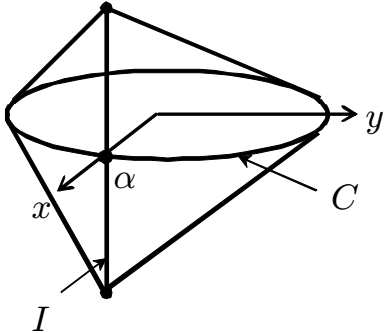}
\caption{Example of a state space $\Omega$ for which mixing entropy is not everywhere continuous (see Example~\ref{noncontinuousexample}).}
\label{noncontinuousfigure}
\end{figure}

\begin{example}{\em
Let $\Omega$ be a square. Let $\alpha$ and $\beta$ be the
midpoints of adjacent faces, noting that these each have unit mixing
entropy, $S(\alpha)=S(\beta)=1$. Let $\gamma = 1/2(\alpha + \beta)$ be the
mid-point of the line segment between $\alpha$ and $\beta$, and
note that it also lies on the line segment between antipodal vertices of $\Omega$ (the diagonal through the square between the chosen faces).  But given that $\gamma$ is not at the midpoint of this diagonal, the Shannon entropy for the associated convex decomposition is less than one, as is therefore the infimum over convex decompositions.  Therefore, the mixing entropy for $\gamma$ satisfies $S(\gamma)< 1$.  Consequently, $S(\gamma)< 1/2 S(\alpha) + 1/2 S(\beta)$, and we have a failure of concavity of the mixing entropy.}
\end{example}
\begin{figure}[H]
\centering
\includegraphics[scale=0.6]{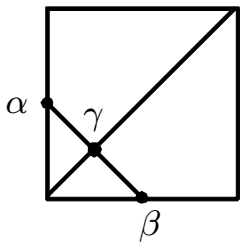}
\caption{Failure of concavity of mixing entropy
for a squit.}
\label{FIG:squareFC}
\end{figure}

In fact, the failure of concavity for the mixing entropy is quite generic.
\begin{theorem} \label{mixnonconcavetheorem}
Mixing entropy is not concave whenever the state space $\Omega$ is a non-simplicial polytope.
\end{theorem}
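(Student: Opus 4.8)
The plan is to reduce the general non-simplicial polytope case to the two-dimensional situation already analyzed in the square example. The key structural fact about a polytope $\Omega$ that fails to be a simplex is that it has a face $F$ (possibly $\Omega$ itself) which is a polygon with at least four vertices, or more generally contains a point with two genuinely distinct convex decompositions into vertices. Actually, the cleanest route: since $\Omega$ is a non-simplicial polytope, there exist vertices (extreme points) $v_1,\dots,v_k$ of $\Omega$ and two distinct sets of coefficients expressing the \emph{same} point, i.e. the vertices of $\Omega$ are affinely dependent in a way that produces a nontrivial ``circuit.'' From such a circuit one extracts two faces, or two subsets of vertices, whose convex hulls intersect in a segment — exactly the configuration $I \cap C$ of the discontinuity example, but now with $C$ replaced by an edge and the whole picture polytopal.

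Concretely, first I would invoke Radon-type / Carathéodory-type reasoning: a polytope is a simplex if and only if every point has a unique representation as a convex combination of extreme points, so non-simpliciality gives a point $\gamma_0$ in the relative interior of some face with (at least) two distinct pure decompositions. Using these, I would locate two pure states $\alpha,\beta$ that are ``extreme'' in complementary directions — the analogue of the midpoints of adjacent faces in the square — arranged so that $\gamma := \tfrac12(\alpha+\beta)$ lies on a segment $[\delta,\varepsilon]$ joining two \emph{other} extreme points, but is \emph{not} the midpoint of that segment. Then the decomposition $\gamma = t\delta + (1-t)\varepsilon$ with $t \neq 1/2$ has Shannon entropy $H(t,1-t) < 1$, so $S(\gamma) < 1$, whereas $S(\alpha)$ and $S(\beta)$ equal $1$ (or at least are bounded below appropriately, because $\alpha$ and $\beta$ are chosen to have only balanced binary decompositions — this is the point where a careful choice is needed). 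This yields $S(\gamma) < \tfrac12 S(\alpha) + \tfrac12 S(\beta)$, contradicting concavity.

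The main obstacle is making the selection of $\alpha$, $\beta$, $\delta$, $\varepsilon$ canonical and guaranteeing the entropy bounds $S(\alpha) = S(\beta) = 1$ (or some common value strictly exceeding $S(\gamma)$) while simultaneously forcing $\gamma$ off the midpoint of $[\delta,\varepsilon]$. In the square this is automatic; in a general polytope one must argue that a non-simplicial polytope always contains a two-dimensional affine slice, through suitable vertices, that looks enough like a quadrilateral — i.e. a planar section whose intersection with $\Omega$ is a polygon with $\ge 4$ extreme points positioned asymmetrically — so that the square construction transplants. I expect the proof to proceed by: (1) reduce to a minimal non-simplicial face; (2) on that face find four extreme points in convex position forming a (not necessarily regular) quadrilateral cross-section; (3) take $\alpha,\beta$ to be relative-interior points of two ``opposite'' edges chosen so their only pure decompositions are into the respective pairs of edge-endpoints, forcing $S(\alpha)=S(\beta)=1$ via an argument like the one in the firefly example (uniqueness of decomposition from a support/face condition); (4) compute that the midpoint $\gamma$ sits on the crossing diagonal off-center, so $S(\gamma)<1$; (5) conclude. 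Step (2)–(3), ensuring the uniqueness of the binary decompositions of $\alpha$ and $\beta$, is where the genuine work lies; everything else is the entropy bookkeeping already illustrated by the square.
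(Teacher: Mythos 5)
There is a genuine gap, and it sits exactly where you locate it (steps (2)--(3)), but the proposed repair does not work. The problem is that a non-simplicial polytope need not contain \emph{any} non-simplicial proper face: for example, the octahedron (and more generally any cross-polytope, or any ``simplicial'' polytope in the combinatorialist's sense) has only simplices as proper faces, yet is not a simplex. So your ``minimal non-simplicial face'' can be the whole polytope, of arbitrary dimension $d>2$, and there is then no two-dimensional face on which to run the square argument. Your fallback --- a planar \emph{cross-section} through suitable vertices that looks like a quadrilateral --- is not a valid substitute, because mixing entropy is controlled by faces, not by sections: a point lying in a 2-plane through four vertices can have convex decompositions into pure states lying entirely outside that plane, so exhibiting an off-center diagonal decomposition bounds $S(\gamma)$ from above, but nothing forces $S(\alpha)$ and $S(\beta)$ to equal $1$ unless $\alpha,\beta$ lie on genuine edges, and nothing forces the four relevant vertices to be coplanar with $\gamma$ on an actual vertex--vertex segment. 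In short, the configuration you need (four vertices in convex position in a common 2-plane, with the relevant midpoint off-center on a diagonal) is not guaranteed to exist by non-simpliciality alone, and you give no argument that it does.

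The paper's proof confronts precisely this case head-on. It inducts on dimension: if some maximal face is non-simplicial, the counterexample is found inside that face (legitimate, because faces are closed under decomposition); otherwise all maximal faces are $(d-1)$-simplices, and a separate lemma produces the violation there. That lemma does not use a quadrilateral at all: it takes two maximal simplicial faces $F_1,F_2$ meeting in a $(d-2)$-simplex, uses their barycenters $\rho_1,\rho_2$ (each with $S=\log d$) and the barycenter $\rho_3$ of the intersection (with $S=\log(d-1)$), and constructs a point $\rho$ in their convex hull that lies in a $d$-vertex simplex $H=\conv(F_1\cap F_2,V)$ but is provably \emph{not} its barycenter, giving $S(\rho)<\log d$ while $\sum_i p_iS(\rho_i)\geq\log d$ (or the analogous comparison with $\log(d-1)$ when $\rho$ lands on the boundary of $H$). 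So the correct general argument replaces your ``transplanted square'' with a comparison of barycenter entropies of simplicial faces; your outline is fine for polygons and for polytopes that happen to have a polygonal face, but the hard case is the one your sketch cannot reach.
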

The proof is given in Appendix~\ref{mixnonconcave}. It follows that an
assumption of concavity for the mixing entropy forces the state space
to be either a simplex (i.e. classical) or not a polytope.  Hence such
an assumption or one that implies it may be a useful tool in
axiomatizing quantum theory.

It is natural to ask what follows from the condition that, as in
classical and quantum theories, measurement and mixing entropy
coincide. One immediate consequence is that mixing entropy will be
concave. In view of Theorem~\ref{mixnonconcavetheorem}, this implies
that either the system is essentially classical, or there are an
infinite number of pure states. Hence equality of measurement and
mixing entropies narrows down possible theories quite a lot. We
discuss this matter further in Appendix~\ref{monoentropic}.

Both measurement and mixing entropy have been considered before,
notably by Hein \cite{Hein}, in a similar context, albeit with
somewhat different aims than ours in view. There are various other
entropic quantities one could reasonably consider. \magenta For
example, a concept of entropy that might be more closely related to
operational tasks is the supremum, over convex decompositions of the
state and over tests, of the classical mutual information between the
random variable specifying the element of the convex decomposition,
and the random outcome of the test.  Natural analogues of this
quantity and of the measurement and preparation entropies defined
above exist in the closely related ordered linear spaces framework
(also known as the convex sets framework) for theories.  Test space
models such as we have defined above induce ordered linear spaces
models by a linearization procedure that embeds the test space in a
vector space and identifies outcomes in the test space with certain
elements of the dual vector space; this procedure allows one to define
concepts of measurement entropy more tightly related to the geometry
of the state space, but that can usually be viewed as special cases of
the test space definition.  Appendix \ref{appendix:linearization}
gives a further brief discussion of this.  \black

From this point on, we focus mainly on measurement entropy.  As always
with mathematical definitions, there is a certain tension between the
ideals of flexibility and generality, on the one hand, and, on the
other, the desire to avoid annoying pathologies. Our test-space
dependent definition of measurement entropy definitely errs on the
side of the former, in that it is consistent with quite absurd
examples. For example, if one includes in one's test space a test
having a single outcome, then all states will automatically have zero
entropy. One can avoid such difficulties by placing various
restrictions on the test spaces to be considered, at the cost of a
slightly more involved technical development. \magenta  Going to the linearized
setting mentioned above may also help. \black Our work in this paper
does not demand such fastidiousness, however, as our results are of a
very general character.

\section{Composite Systems and Joint Entropy}

Most of the interesting problems of information theory involve more
than one system. The following subsection describes how to treat
composite systems in the language of test spaces. The idea is that,
given systems $A$ and $B$, the joint system $AB$ should be associated
with a test space and state space of its own. However, there is
not a unique recipe for determining test and state spaces for $AB$
given the test and state spaces for $A$ and $B$.  Instead, a
theory must give additional rules that specify how systems
combine.\footnote{Quantum theory does just that: the rule is that the
  joint measurements and states correspond respectively to
  maximal sets of pairwise orthogonal projections and density
  operators on the tensor product of the individual Hilbert
  spaces.} Our results will pertain to a variety of notions of
composition, although we limit the scope by requiring certain
properties to hold.  In particular, we assume that the test space of
the composite includes all product tests and conditional two-stage
tests (where one party's choice of test is conditioned on the outcome
of the other party's test).  One motivation for this is to have a test
space that is sufficiently rich to be interesting.  Another motivation
is that this assumption guarantees that all states are
non-signaling.  We go on to define analogues of familiar
quantities, such as joint entropies and the mutual information, which
are used later to analyze information causality.

\subsection{Composite Systems}

Consider two systems, $A$ and $B$, where $A = ({\mathfrak A},\Omega^{A})$ and $B = ({\mathfrak B},\Omega^{B})$. For convenience, assume that these are controlled by two parties, called Alice and Bob. The first, and most basic, assumption we shall make is that Alice can perform any test $E \in {\mathfrak A}$ simultaneously with Bob performing any test $F \in {\mathfrak B}$. This can be regarded as a single {\em product test}. The possible outcomes of this product test are pairs of the form $(e,f)\in E \times F$.
\begin{definition}
The {\em Cartesian product} of the test spaces $\mathfrak A$ and $\mathfrak B$ is the collection of all product tests. It is denoted ${\mathfrak A} \times {\mathfrak B}$.
\end{definition}

The set $\Omega({\mathfrak A} \times {\mathfrak B})$, of all states that can be defined on the Cartesian product test space, typically includes \emph{signaling} states, which allow Alice to send messages instantaneously to Bob, or vice versa, by varying her choice of which test to perform.
\begin{definition}
A state $\omega^{AB}$ on ${\mathfrak A} \times {\mathfrak B}$ is {\em non-signaling} iff
\begin{eqnarray}\label{nosignaling}
\sum_{f\in F} \omega^{AB}(e,f) &=& \sum_{f'\in F'} \omega^{AB}(e,f') \quad \forall e,F,F' \nonumber\\
\sum_{e\in E} \omega^{AB}(e,f) &=& \sum_{e'\in E'} \omega^{AB}(e',f) \quad \forall f,E,E'.
\end{eqnarray}
\end{definition}
If a state $\omega^{AB}$ is non-signaling, it is possible to define the \emph{marginal} (or \emph{reduced}) state $\omega^A$ via
\begin{equation}
\omega^{A}(e) = \sum_{f \in F} \omega(e,f),
\end{equation}
where Eq.~(\ref{nosignaling}) ensures that the right hand side is independent of $F \in {\mathfrak B}$. The marginal $\omega^B$ is defined similarly.

If $\omega^{AB}$ is non-signaling, it is also possible to define a \emph{conditional state}, $\omega^{B|e}$.  Informally, this is the updated state at Bob's end following the outcome $e$ being obtained for a test at Alice's end:
\[
\omega^{B|e}(f) := \omega^{AB}(e,f)/\omega^A(e).
\]
By convention, $\omega^{B|e}$ is zero if $\omega^A(e)$ is zero. The conditional state $\omega^{A|f}$ is defined similarly.

Notice that a particular type of measurement, which might be thought reasonable, is not included in the Cartesian product. This is a joint measurement, where Alice first measures her system, and communicates the result to Bob, who performs a measurement which depends on Alice's outcome. Entangled measurements, such as are allowed in quantum theory, are also not included. Hence the Cartesian product ${\mathfrak A} \times {\mathfrak B}$ models a situation in which Alice and Bob are fairly limited - they can act independently and collate the results of their actions at a later time, but cannot otherwise communicate.

It is possible to construct a more sophisticated product of two test spaces, which does allow for the kind of two stage measurements just described (although still not entangled measurements). Let $\FR{\A}{\B}$ denote the test space consisting of the following
\begin{enumerate}
\item All two-stage tests, where a test $E \in {\mathfrak A}$ is performed, and then, depending on the
outcome $e$ that is obtained, a pre-selected test $F_{e} \in {\mathfrak B}$ is performed.
\item All two-stage tests, where a test $F \in {\mathfrak B}$ is performed, and then, depending on the
outcome $f$ that is obtained, a pre-selected test $E_{f} \in {\mathfrak A}$ is performed.
\end{enumerate}
$\FR{\A}{\B}$ is called the \emph{Foulis-Randall} or \emph{bilateral} product of test spaces ${\mathfrak A}$ and ${\mathfrak B}$.

The Foulis-Randall product contains the Cartesian product, ${\mathfrak
  A}\times{\mathfrak B} \subseteq \FR{\A}{\B}$, because product tests
are a special case of two-stage tests. Furthermore, if either $A$ or
$B$ is non-classical, then not all two-stage tests are product tests,
so that the containment is strict.  The containment of one test space
in another has consequences for their state spaces.  Specifically, if
$\mathfrak X$ and $\mathfrak Y$ are test spaces such that ${\mathfrak
  X} \subseteq {\mathfrak Y}$, then the convex set $\Omega({\mathfrak
  Y})$ may be in a higher dimensional space than $\Omega({\mathfrak
  X})$, but the restrictions of states in $\Omega({\mathfrak Y})$ to
${\mathfrak X}$ (which are well-defined, since every test in
${\mathfrak X}$ is also a test in ${\mathfrak Y}$) are all contained
in $\Omega({\mathfrak X})$. In other words, writing $\Omega({\mathfrak
  Y})|_{{\mathfrak X}}$ for the set of restrictions to $\mathfrak X$
of states on $\mathfrak Y$, we have $\Omega({\mathfrak
  Y})|_{{\mathfrak X}} \subseteq \Omega({\mathfrak X})$.  Because the
additional measurements in ${\mathfrak Y}$ place additional
constraints on these states, the containment may well be strict.

It follows that the restriction of the maximal state space of the
Foulis-Randall product to the Cartesian product is contained within
the maximal state space of the Cartesian product,
$\Omega(\FR{\A}{\B})|_{{\mathfrak A}\times{\mathfrak B}} \subseteq
\Omega({\mathfrak A}\times{\mathfrak B})$.  The containment is strict
if one of the systems is non-classical.  Indeed, the states in
$\Omega(\FR{\A}{\B})|_{{\mathfrak A}\times{\mathfrak B}}$ correspond
exactly to the non-signaling states in $\Omega({\mathfrak A} \times
{\mathfrak B})$.  This is demonstrated in Ref.~\cite{FR}.

We are now prepared to define the class of test and state spaces for
composites in which we shall be interested. The test space for the
composite, which we denote by ${\mathfrak C}$, is required to contain
the Foulis-Randall product of the components, $\FR{{\mathfrak
    A}}{{\mathfrak B}} \subseteq {\mathfrak C}$. The state space of
the composite, which we denote by $\Omega^{AB}$, is unconstrained
beyond being a subset of the maximal state space, $\Omega^{AB}
\subseteq \Omega({\mathfrak C})$.  Recalling that $\FR{{\mathfrak
    A}}{{\mathfrak B}} \subseteq {\mathfrak C}$ implies
$\Omega({\mathfrak C})|_{\FR{{\mathfrak A}}{{\mathfrak B}}} \subseteq
\Omega(\FR{{\mathfrak A}}{{\mathfrak B}})$ and that all the states in
$\Omega(\FR{{\mathfrak A}}{{\mathfrak B}})$ are non-signalling, it
follows that all states in $\Omega^{AB}$ are non-signalling.  Indeed,
the main motivation for confining our attention to test spaces
containing $\FR{{\mathfrak A}}{{\mathfrak B}}$ is that this is
sufficient to ensure no-signalling without any further constraints on
the state space.

Given a state $\omega^{AB}\in \Omega^{AB}$, the marginals $\omega^A$,
$\omega^B$, and conditionals of the form $\omega^{A|f}$,
$\omega^{B|e}$, are defined in the obvious way by the probabilities
which $\omega^{AB}$ assigns to the product tests.  Furthermore, we
assume that the composite systems we consider satisfy the following
natural requirement: that if a test is performed on system $A$, the
conditional state on system $B$ must be allowed in the theory, i.e.,
be contained in $\Omega^B$, and vice versa.  Hence $\Omega^{AB}$
satisfies the constraint that for all $e$ and $f$ such that
$\omega^A(e), \omega^B(f) \neq 0$, $\omega^{B|e}$ and $\omega^{A|f}$
belong to $\Omega^{B}$ and $\Omega^{A}$ respectively. This is enough
to ensure that the marginal states $\omega^{A}$, $\omega^B$ also
belong to the state spaces of the component systems.

A general composite test space ${\mathfrak C}$ may contain non-product measurements, which are not contained in the Foulis-Randall product. Quantum theory, for instance, has a test space for composites that is larger than the Foulis-Randall product.  If $A$ and $B$ are quantum systems, so that $A = ({\mathfrak F}(\H),\Omega(\H))$ and $B  = ({\mathfrak F}(\K),\Omega(\K))$, then the quantum joint system is $AB := ({\mathfrak F}(\H \otimes \K), \Omega(\H \otimes \K))$, which is a composite in our sense and contains non-product measurement outcomes, for instance, entangled ones.

Henceforth, $AB$ will stand for a general non-signaling
composite \blue of systems $A$ and $B$. In \black the particular case where $A =
(\{E\},\Delta(E))$ is a classical system, we always take \blue ${\mathfrak
  C}$ to be the Foulis-Randall product $\FR{\{E\}}{{\mathfrak B}}$ \black. We also assume that
composition of systems is associative, so that for any three systems
$A$, $B$ and $C$, there is a natural isomorphism $A(BC) \simeq (AB)C$.

In addition to specifying how systems combine, a probabilistic
theory must specify what sorts of systems are allowed.  For instance,
in finite-dimensional quantum theory, every dimensionality of Hilbert
space defines a different type of system and they are all allowed.
Furthermore, a classical system of arbitrary dimensionality (that is,
arbitrary cardinality for the test) can be defined within quantum
theory as a restriction upon a quantum system of the same
dimensionality, so in this sense classical systems are allowed as
well.  A probabilistic theory must specify the types of systems that
are allowed and how these compose.  We shall confine our attention to
theories incorporating only finite-dimensional systems, and those that
contain, for any finite set $E$, the classical system
$(E,\Delta(E))$. (Thus, for us, {\em quantum theory} means
finite-dimensional quantum theory in conjunction with classical
systems.) For a discussion of what such theories might look like in
category-theoretic terms, see \cite{BW09a, BW09b}.

\subsection{Joint Entropies, Conditional Entropies, Mutual Information}

Consider a composite system $AB=( {\mathfrak C}, \Omega^{AB})$. The measurement entropy  $H(\omega^{AB})$ of a state $\omega^{AB}\in\Omega^{AB}$, which we will sometimes denote by $H(AB)$, is the infimum over $E\in{\mathfrak C}$ of $H_E(\omega^{AB})$.
In this context, it will also be understood that $H(A)$ and $H(B)$ stand for
the entropies $H(\omega^{A})$ and $H(\omega^{B})$ of the marginal
states $\omega^{A}$ and $\omega^{B}$.
\begin{theorem}\label{measuremententsubadditive}
Measurement entropy is subadditive. That is, for any composite $AB$,
\[H(AB) \leq H(A) + H(B).\]
\end{theorem}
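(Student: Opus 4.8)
The plan is to exploit the fact that the composite test space $\mathfrak{C}$ is required to contain the Foulis--Randall product, and hence the Cartesian product $\mathfrak{A}\times\mathfrak{B}$. Thus the infimum defining $H(AB)$ ranges over a family of tests that includes every product test $E\times F$ with $E\in\mathfrak{A}$, $F\in\mathfrak{B}$, and therefore
\[
H(AB)\;=\;\inf_{G\in\mathfrak{C}}H_G(\omega^{AB})\;\leq\;\inf_{E\in\mathfrak{A},\,F\in\mathfrak{B}}H_{E\times F}(\omega^{AB}).
\]
It then suffices to bound the right-hand side by $H(A)+H(B)$.

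The second step is a reduction to the classical case. Fix $E$ and $F$; then $p(e,f):=\omega^{AB}(e,f)$ is an ordinary probability distribution on the finite set $E\times F$, so $H_{E\times F}(\omega^{AB})=H(p)$ is a classical Shannon entropy. The key observation is that, because $\omega^{AB}$ is non-signaling --- which is guaranteed precisely by $\FR{\A}{\B}\subseteq{\mathfrak C}$ --- the two marginals of $p$ are exactly $\omega^A|_E$ and $\omega^B|_F$: indeed $\sum_{f\in F}p(e,f)=\omega^A(e)$ and $\sum_{e\in E}p(e,f)=\omega^B(f)$ by Eq.~(\ref{nosignaling}). Invoking the elementary subadditivity of the Shannon entropy of a joint distribution,
\[
H_{E\times F}(\omega^{AB})\;\leq\;H\big(\omega^A|_E\big)+H\big(\omega^B|_F\big)\;=\;H_E(\omega^A)+H_F(\omega^B).
\]

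Finally, since the upper bound decouples into a term depending only on $E$ and one depending only on $F$, the infimum over product tests factorizes,
\[
\inf_{E,F}\big[H_E(\omega^A)+H_F(\omega^B)\big]=\inf_{E\in\mathfrak{A}}H_E(\omega^A)+\inf_{F\in\mathfrak{B}}H_F(\omega^B)=H(A)+H(B),
\]
and chaining the three displays yields $H(AB)\leq H(A)+H(B)$. I do not anticipate a substantive obstacle: the proof is essentially two lines once the right family of witness tests is identified. The only points requiring care are (i) that it is the non-signaling property --- ensured by the standing assumption $\FR{\A}{\B}\subseteq{\mathfrak C}$ --- which identifies the marginals of the product-test statistics with the marginal states, so that classical subadditivity genuinely applies, and (ii) that the achieved-infimum hypothesis adopted earlier lets one, if desired, replace the infima by minima throughout without changing anything. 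It is worth remarking that no feature of the composition beyond the presence of product tests is used, so the bound holds uniformly for every composite in the sense defined above, including those (such as quantum theory) whose composite test space is strictly larger than the Foulis--Randall product.
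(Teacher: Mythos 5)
Your proof is correct and follows essentially the same route as the paper's: both use product tests as witnesses in the infimum defining $H(AB)$, identify the marginals of the product-test statistics with $\omega^A|_E$ and $\omega^B|_F$ via non-signaling, and invoke classical subadditivity of Shannon entropy. The only cosmetic difference is that the paper picks specific tests $E$, $F$ achieving $H(A)$ and $H(B)$ (using the achieved-infimum assumption), whereas you keep the infima and observe that the bound factorizes; the substance is identical.
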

\begin{proof} Let $\omega$ be the joint state of $AB$, with marginal states $\omega^A$ and $\omega^B$. Choose $E$ and $F$ with $H(\omega^A) = H_{E}(\omega^A)$ and $H(\omega^{B}) = H_{F}(\omega^{B})$. By the definition of
measurement entropy, the definition of a composite, and the subadditivity of Shannon entropy, we have
$H(AB) \leq H_{EF}(\omega) \leq H_{E}(\omega^{A}) + H_{F}(\omega^{B})  = H(A) + H(B)$.
\end{proof}

\begin{definition}
The \emph{conditional measurement entropy} between $A$ and $B$ is defined to be
\begin{equation}\label{condmmtentropy}
H(A|B) := H(AB) - H(B).
\end{equation}
\end{definition}
Our notation here is less precise than it might be, since the joint entropy $H(AB)$ depends on the test space associated with the joint system, hence so do conditional entropies. We will try to be clear, at any point where the question could
arise, as to what product is in play.

Classically, given a joint distribution $\omega^{AB}$ over variables $A$ and $B$, one defines the {\em mutual information} by
\begin{equation}\label{mutualinfdef}
I(A:B) = H(A) + H(B) - H(AB),
\end{equation}
where $H$ denotes the Shannon entropy. One can regard this as a
measure of how far $A$ and $B$ are from being independent: by
subadditivity, $I(A:B) \geq 0$, with $I(A:B) = 0$ iff $A$ and $B$ are
independent, i.e., $\omega^{AB}$ factorizes. In attempting to extend
the concept of mutual information to more general models, one might
very naturally consider defining $I(A:B)$ to be the maximum of the
mutual informations $I(E:F)$ as $E$ and $F$ range over tests belonging
to systems $A$ and $B$, respectively. However, the usual practice in
quantum theory is simply to take Equation~(\ref{mutualinfdef}), with
von Neumann entropies replacing Shannon entropies, as {\em defining}
mutual information. In general, this gives a different value. In order
to facilitate comparison with quantum theory, we shall adopt the
following
\begin{definition}
Let $AB$ be a composite system. The {\em
  measurement-entropy-based mutual information} between $A$ and
$B$ is
\begin{equation} \label{mutualinfdefgeneral}
I(A:B) := H(A) + H(B) - H(AB).
\end{equation}
\end{definition}
With this definition, the subadditivity of measurement entropy
(Theorem~\ref{measuremententsubadditive}) implies that
measurement-entropy-based mutual information is
non-negative. Hereafter, we will refer to this simply as the
``mutual information''.  Note that Eq.~(\ref{mutualinfdef}) is a
special case of this definition.

Now intuitively, one might expect that the mutual information $I(A:B)$
between two systems should not {\em decrease} if we recognize that $B$
is a part of some larger composite system $BC$ -- i.e., that $I(A:B)
\leq I(A:BC)$. Simple algebraic manipulations (using
Eqs.~(\ref{condmmtentropy}) and (\ref{mutualinfdefgeneral}))
allow us to reformulate this condition in various ways.
\begin{lemma}\label{strongsublemma}
The following are equivalent:
\begin{itemize}
\item[(a)] $I(A:BC) \geq I(A:B)$
\item[(b)] $H(A|BC) \leq H(A|B)$
\item[(c)] $H(A,B) + H(B,C) - H(B) \leq H(A,B,C)$
\item[(d)] $I(A:B|C) \geq 0$, where $I(A:B|C) = H(A|C) + H(B|C) - H(AB|C)$.
\end{itemize}
\end{lemma}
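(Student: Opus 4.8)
The plan is to treat Lemma~\ref{strongsublemma} as a purely algebraic unwinding of the definitions (\ref{condmmtentropy}) and (\ref{mutualinfdefgeneral}): I would rewrite each of (a)--(d) in terms of the joint measurement entropies $H(A),H(B),H(C),H(AB),H(AC),H(BC),H(ABC)$ of the fixed composite $ABC$, and observe that all four collapse to a single strong-subadditivity inequality. There is no substantive obstacle here; the only things to watch are routine bookkeeping and the (already flagged) convention that every joint entropy is taken with respect to the chosen composite test space $\mathfrak{C}$.

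Concretely, I would first record the chain of identities obtained from the definitions (\ref{condmmtentropy}) and (\ref{mutualinfdefgeneral}):
\[
I(A:BC)-I(A:B) \;=\; H(A|B)-H(A|BC) \;=\; \bigl(H(AB)-H(B)\bigr)-\bigl(H(ABC)-H(BC)\bigr) \;=\; H(AB)+H(BC)-H(B)-H(ABC).
\]
Condition (a) asserts the leftmost expression is $\ge 0$, (b) asserts the second is $\ge 0$, and (c), i.e.\ $H(ABC)\le H(AB)+H(BC)-H(B)$, asserts the rightmost expression is $\ge 0$; since these three expressions are equal, (a) $\Leftrightarrow$ (b) $\Leftrightarrow$ (c).

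For (d) I would expand each conditional entropy over $C$ via (\ref{condmmtentropy}) --- $H(A|C)=H(AC)-H(C)$, $H(B|C)=H(BC)-H(C)$, $H(AB|C)=H(ABC)-H(C)$ --- obtaining
\[
I(A:B|C) \;=\; H(AC)+H(BC)-H(ABC)-H(C) \;=\; I(A:BC)-I(A:C).
\]
Thus (d) says $I(A:BC)\ge I(A:C)$, which is condition (a) with the two non-$A$ subsystems $B$ and $C$ interchanged; understood as a constraint on all composites of the theory, this is the same condition as (a)--(c), closing the equivalence. The ``hard part'' is therefore only notational: one must keep track of which of the two non-$A$ systems plays the passive (respectively conditioned-upon) role in each formulation, but no genuine mathematical difficulty arises.
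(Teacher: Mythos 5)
Your proof is correct and is exactly the ``simple algebraic manipulations'' the paper invokes without writing out: each of (a)--(d) reduces via Eqs.~(\ref{condmmtentropy}) and (\ref{mutualinfdefgeneral}) to the non-negativity of a single linear combination of joint entropies, modulo the relabeling of the two non-$A$ systems in (d), which you correctly identify and which is harmless because the paper reads the lemma as a condition quantified over all systems $A$, $B$, $C$. Note also that as printed the inequality in (c) runs the wrong way (it should read $H(A,B)+H(B,C)-H(B)\geq H(A,B,C)$ to match standard strong subadditivity and conditions (a), (b), (d)); you have silently used the intended direction, which is the right call.
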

\begin{definition}\label{strongsubadd}
The measurement entropy is said to be \emph{strongly subadditive} if it satisfies the equivalent conditions (a)-(d).
\end{definition}
(We use this terminology despite the fact that it is usually only condition (c) that goes by the name of ``strong subadditivity'' and despite the fact that conditions (a) and (d) constrain the measurement entropy only through the definitions of $I(A:BC)$ and $I(A:B|C)$.)
A probabilistic theory in which conditions (a)-(d) are satisfied for all systems $A, B$ and $C$ will also be called \emph{strongly subadditive}.

Both the Shannon and von Neumann entropies are strongly
subadditive. In the former case, this is a straightforward exercise,
but in the latter, a relatively deep fact. Colloquially, this means
that in classical and quantum theories, just forgetting about or
discarding a system $C$ never increases one's mutual information
between systems $A$ and $B$. As the following shows, however, strong
subadditivity can fail in general theories, even when two of the three
systems are classical. One potential gloss is that discarding or
forgetting about system $C$ \emph{can} increase the mutual information
between systems $A$ and $B$. But a more sensible
reading is perhaps that the quantity defined as mutual information
should not in the general case be interpreted as ``the information one
system contains about another.''

\begin{figure}[H]
\centering
\includegraphics[scale=0.6]{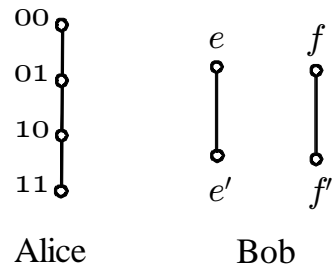}
\caption{The component test spaces for example \ref{failurestrongsubaddexample}.}
\label{FIG:2bitsandsquit}
\end{figure}
\begin{example}\label{failurestrongsubaddexample}(Failure of strong subadditivity of the measurement entropy.)
Consider a tripartite system $ABC$, where $A$ and $B$ are classical bits and $C$ is a squit, with test space $\{\{e,e'\},\{f,f'\}\}$. Consider the joint state described by the following table:
\[
\begin{array}{l|cc|cc}
     &    e    &    e'     &    f    &   f'  \\
     \hline
00  & 1/4   &  0    &  1/4  &  0     \\
01 &  1/4   &  0    &   0  &  1/4  \\
10 &   0    &  1/4  &   1/4  &   0  \\
11 &   0    &  1/4  &   0  &   1/4
\end{array}  \]
In words, the outcome of test $\{e,e'\}$ is perfectly correlated with $A$ while the outcome of the test $\{f,f'\}$ is perfectly correlated with $B$. It is easily verified that
\[
H(C) = H(AC) = H(BC) = 1.
\]
If all three systems are measured, with the test on $C$ perhaps depending on the values of $A$ and $B$, there are always four distinct outcomes, each with probability $1/4$. Hence $H(ABC) = 2$ and
\begin{eqnarray*}
I(A:B|C) & = & H(AC) + H(BC) - H(C) - H(ABC)\\
& = & 1 + 1 - 1 - 2 = -1 < 0,
\end{eqnarray*}
which contradicts form (d) of strong subadditivity.
\end{example}
Note that the foregoing example is all but classical, depending not on any notion of entanglement or non-locality, but only on
the fact that one can measure either, but never both, of $\{e,e'\}$ and $\{f,f'\}$.

This section concludes with some lemmas, which hold in the special case that one or more of the systems in the composite is classical. Some of these are useful later on.

\begin{lemma}\label{goodgollymissmolly}
Let $\omega^{AB}$ be a state on $AB$, where $A$ is classical. Then
\begin{equation} \label{molly2}
H(\omega^{AB}) = H(\omega^A) + \sum_{e \in E} \omega^A(e) H(\omega^{B|e}).
\end{equation}
\end{lemma}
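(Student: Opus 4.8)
The plan is to compute the local measurement entropy $H_G(\omega^{AB})$ directly for every test $G$ in the composite test space, and then take the infimum. Since $A$ is classical, by the standing convention adopted earlier the composite test space is exactly the Foulis--Randall product $\FR{\{E\}}{{\mathfrak B}}$. I would first observe that, because $\{E\}$ contains only the single test $E$, every test in $\FR{\{E\}}{{\mathfrak B}}$ has the form $G = G_{(F_e)} := \{(e,f) : e \in E,\ f \in F_e\}$ for some assignment of a test $F_e \in {\mathfrak B}$ to each outcome $e \in E$: the two-stage tests that measure $E$ first are precisely these, while the two-stage tests that measure some $F \in {\mathfrak B}$ first are just the product tests $E \times F$, which are recovered by taking all $F_e = F$.

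Next I would carry out the key calculation. For such a $G$, using $\omega^{AB}(e,f) = \omega^A(e)\,\omega^{B|e}(f)$ and expanding the logarithm,
\[
H_G(\omega^{AB}) = - \sum_{e \in E} \sum_{f \in F_e} \omega^A(e)\,\omega^{B|e}(f)\,\bigl(\log \omega^A(e) + \log \omega^{B|e}(f)\bigr).
\]
Summing the $\log \omega^A(e)$ terms over $f \in F_e$ and using $\sum_{f \in F_e}\omega^{B|e}(f) = 1$ collapses the first group to $-\sum_e \omega^A(e)\log\omega^A(e) = H_E(\omega^A) = H(\omega^A)$, the last equality because $E$ is the only test on the classical system $A$. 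The second group is $\sum_e \omega^A(e)\,H_{F_e}(\omega^{B|e})$. Hence $H_G(\omega^{AB}) = H(\omega^A) + \sum_{e \in E} \omega^A(e)\,H_{F_e}(\omega^{B|e})$. One must be slightly careful about outcomes $e$ with $\omega^A(e) = 0$: then $\omega^{AB}(e,f) = 0$ for all $f$, so these contribute nothing to $H_G$, consistently with the convention $\omega^{B|e} = 0$ and with $0\log 0 = 0$.

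Finally I would take the infimum of $H_G(\omega^{AB})$ over all tests $G = G_{(F_e)}$, that is, over all choice functions $e \mapsto F_e$. Since the $F_e$ can be chosen independently for distinct outcomes $e$ and the weights $\omega^A(e)$ are nonnegative, the infimum of the sum equals the sum of the infima:
\[
\inf_{(F_e)} \sum_{e \in E} \omega^A(e)\,H_{F_e}(\omega^{B|e}) = \sum_{e \in E} \omega^A(e)\,\inf_{F_e \in {\mathfrak B}} H_{F_e}(\omega^{B|e}) = \sum_{e \in E} \omega^A(e)\,H(\omega^{B|e}),
\]
which yields the claimed identity; invoking the standing assumption that the measurement entropy of each $\omega^{B|e}$ is attained on some test, the composite entropy is attained on a single test $G$. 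The only real obstacle is the first step: verifying that the tests of $\FR{\{E\}}{{\mathfrak B}}$ are exactly the $G_{(F_e)}$, so that the infimum over composite tests genuinely decouples into independent per-outcome infima. Everything after that is routine bookkeeping with the Shannon entropy.
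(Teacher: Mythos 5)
Your proof is correct, and it is exactly the "straightforward" computation the paper alludes to but omits: since $A$ is classical the composite test space is by convention the Foulis--Randall product, whose tests are precisely the choice-function tests $G_{(F_e)}$, the local entropy of each such test splits via the grouping property of Shannon entropy, and the infimum decouples over the outcomes $e$. Your attention to the identification of the tests of $\FR{\{E\}}{{\mathfrak B}}$ and to the $\omega^A(e)=0$ convention covers the only points where care is needed.
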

The proof is straightforward. As a shorthand, when $A$ is classical we might write
\[
H(AB) =  H(A) + \sum_e p(e) H(B|e).
\]
\begin{corollary}
If $A$ is classical, then $H(B|A) \geq 0$ for any system $B$.
\end{corollary}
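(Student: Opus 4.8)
The plan is to reduce the claim directly to Lemma~\ref{goodgollymissmolly} together with the elementary fact that measurement entropy is never negative. First I would unwind the definition of the conditional measurement entropy, Eq.~(\ref{condmmtentropy}), in the symmetric form $H(B|A) = H(AB) - H(A)$, so that the task becomes showing $H(AB) - H(A) \geq 0$ whenever $A$ is classical.

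Next, since $A$ is classical, Lemma~\ref{goodgollymissmolly} applies and yields
\[
H(AB) - H(A) = \sum_{e \in E} \omega^A(e)\, H(\omega^{B|e}),
\]
where $E$ is the unique test of $A$ and $\omega^{B|e}$ is the conditional state of $B$ given outcome $e$. It then suffices to observe that every summand is non-negative: the coefficients $\omega^A(e)$ are probabilities, hence $\geq 0$, and each $H(\omega^{B|e}) \geq 0$ because measurement entropy is by definition an infimum of local entropies $H_F$, each of which is the Shannon entropy of a probability distribution on the finite set $F$ and therefore non-negative. Summing non-negative terms gives $H(B|A) \geq 0$, as required.

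There is no substantive obstacle here; the only points worth recording are that the argument rests on two facts already in place — the decomposition of Lemma~\ref{goodgollymissmolly} (which relies on local finiteness and on the composite containing the Foulis--Randall product, so that the conditionals $\omega^{B|e}$ are genuine states of $B$) and the non-negativity of measurement entropy — and that no appeal to strong subadditivity is needed. Indeed, the point of the corollary is precisely that, whereas $H(B|A) \geq 0$ can fail for general composites, it does hold when $A$ is classical, since in that case $H(AB)$ is literally $H(A)$ plus an average of (non-negative) entropies of conditional states.
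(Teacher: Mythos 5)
Your proof is correct and is exactly the argument the paper has in mind: it declares the corollary "immediate" from the definition of conditional entropy and Lemma~\ref{goodgollymissmolly}, and your write-up simply spells out that $H(B|A) = \sum_{e} \omega^A(e) H(\omega^{B|e})$ is a non-negative combination of non-negative measurement entropies. No differences of substance.
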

The proof is immediate from Eqs.~(\ref{condmmtentropy}) and (\ref{molly2}).
\begin{corollary}\label{independentsystemscorollary}
If $A$ is classical and independent of $B$, then $H(AB) = H(A) + H(B)$.
\end{corollary}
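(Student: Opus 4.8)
The plan is to obtain this as an essentially immediate consequence of Lemma~\ref{goodgollymissmolly}, the decomposition of the joint measurement entropy when one tensor factor is classical, combined with the observation that independence collapses all of Bob's conditional states onto his marginal. So the real work has already been done; what remains is bookkeeping.

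First I would unpack the hypothesis. Here $A = (\{E\},\Delta(E))$ is classical, so the composite test space is the Foulis--Randall product $\FR{\{E\}}{\B}$, and ``independent'' means the joint state factorizes: $\omega^{AB}(e,f) = \omega^A(e)\,\omega^B(f)$ for all outcomes $e \in E$ and all $f$. From this, for each $e$ with $\omega^A(e) \neq 0$ the conditional state is
\[
\omega^{B|e}(f) = \frac{\omega^{AB}(e,f)}{\omega^A(e)} = \omega^B(f),
\]
so $\omega^{B|e} = \omega^B$; the outcomes $e$ with $\omega^A(e)=0$ are irrelevant since they receive zero weight in any $\omega^A$-average over $E$, so the convention setting $\omega^{B|e}=0$ there does no harm. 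Substituting into Lemma~\ref{goodgollymissmolly} then gives
\[
H(\omega^{AB}) = H(\omega^A) + \sum_{e \in E} \omega^A(e)\, H(\omega^{B|e}) = H(\omega^A) + H(\omega^B)\sum_{e \in E}\omega^A(e) = H(A) + H(B),
\]
using $\sum_{e\in E}\omega^A(e)=1$.

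I do not expect any genuine obstacle: the only point that requires a moment's thought is that factorization forces $\omega^{B|e}$ to be constant in $e$ on the support of $\omega^A$, which is what makes the weighted sum collapse. It is worth recording for sanity that this is consistent with, and strictly sharper than, Theorem~\ref{measuremententsubadditive}: subadditivity already yields $H(AB) \leq H(A)+H(B)$ in full generality, so the content of the corollary is really the reverse inequality in this special case, and the computation above delivers equality outright rather than merely the bound.
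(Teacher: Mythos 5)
Your proof is correct and follows exactly the paper's own argument: apply Lemma~\ref{goodgollymissmolly} and use that independence forces $\omega^{B|e} = \omega^B$, so the weighted sum collapses to $H(B)$. The extra remarks about zero-probability outcomes and consistency with subadditivity are fine but not needed.
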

\begin{proof}
The assertion that $A$ and $B$ are independent means that the joint state is $\omega^{AB} = \omega^A \otimes \omega^B$, i.e., that $\omega^{B|e} = \omega^B$ for all $e \in E$. By Lemma~\ref{goodgollymissmolly}, we have
\begin{eqnarray*}
H(AB) & = & H(A) + \sum_{e \in E} \omega^A(e) H(\omega^{B|e})\\
 & = &  H(A) + \left(\sum_{e \in E} \omega^A(e)\right) H(B) = H(A) + H(B).
\end{eqnarray*}
\end{proof}

Finally, strong subadditivity does hold in the special case that systems $A$ and $C$ in Lemma~\ref{strongsublemma} are classical. Colloquially, discarding a \emph{classical} system can never result in an increase in the mutual information between a general system and another classical system.
\begin{lemma}
Let $A$ and $C$ be classical. Then for any system $B$,
\[
H(A|BC) \leq H(A|B).
\]
Hence, the equivalent conditions of Lemma~\ref{strongsublemma} are satisfied.
\end{lemma}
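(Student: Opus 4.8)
The plan is to establish the lemma through condition (d) of Lemma~\ref{strongsublemma}, namely $I(A:B|C) \geq 0$, which is by far the most tractable of the four equivalent formulations whenever a conditioning system is classical. The desired inequality $H(A|BC) \leq H(A|B)$, which is condition (b), then follows immediately by invoking the equivalence (a)--(d) already proved in Lemma~\ref{strongsublemma}. The main point to anticipate is that one should \emph{not} attack (b) directly via a chain-rule expansion of $H(ABC)-H(BC)$ and $H(AB)-H(B)$; the natural place for the classicality of $C$ to do its work is in condition (d), so the whole strategy hinges on choosing the right member of the equivalence class.

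First I would unfold $I(A:B|C)$ using its definition together with $H(X|C) = H(XC) - H(C)$, obtaining $I(A:B|C) = H(AC) + H(BC) - H(C) - H(ABC)$. Because $C$ is classical, I would then peel it off from each joint entropy by the chain rule of Lemma~\ref{goodgollymissmolly}, applied with $C$ playing the role of the classical system and $A$, $B$, or the composite $AB$ playing the role of the remaining system, giving $H(XC) = H(C) + \sum_{c} \omega^{C}(c) H(\omega^{X|c})$ for $X \in \{A,B,AB\}$. Substituting these three expansions, the four copies of $H(C)$ cancel and the expression collapses to
\[
I(A:B|C) = \sum_{c} \omega^{C}(c)\,\bigl[\, H(\omega^{A|c}) + H(\omega^{B|c}) - H(\omega^{AB|c}) \,\bigr].
\]

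Each bracketed term is exactly the measurement-entropy-based mutual information of the conditional state $\omega^{AB|c}$ of the composite $AB$, which is non-negative by the subadditivity of measurement entropy (Theorem~\ref{measuremententsubadditive}). Since the weights $\omega^{C}(c)$ are non-negative and sum to one, $I(A:B|C)$ is a convex combination of non-negative quantities, whence $I(A:B|C) \geq 0$. This is precisely condition (d), so by Lemma~\ref{strongsublemma} the equivalent condition (b), $H(A|BC) \leq H(A|B)$, holds as well.

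I do not expect a genuine obstacle here: once the proof is routed through condition (d), everything reduces to the bookkeeping of the chain-rule cancellation above. Two structural remarks are worth making. First, the classicality of $C$ is used essentially, since it is exactly what licenses the chain rule that converts the \emph{conditional} mutual information into an \emph{average of ordinary} mutual informations; the classicality of $A$ is in fact not needed for this route and enters only through the symmetric phrasing of the hypothesis. Second, all of the real content of the statement resides in the equivalence (b)~$\Leftrightarrow$~(d) furnished by Lemma~\ref{strongsublemma}: condition (d) is available essentially for free the moment the conditioning system is classical, so I would present this lemma as a direct corollary of that earlier equivalence rather than as an independent computation.
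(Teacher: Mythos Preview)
Your core argument is sound and establishes $I(A:B|C)\geq 0$: peeling off the classical system $C$ via Lemma~\ref{goodgollymissmolly} reduces the conditional mutual information to the convex combination $\sum_c \omega^C(c)\bigl[H(\omega^{A|c})+H(\omega^{B|c})-H(\omega^{AB|c})\bigr]$, each bracket being non-negative by subadditivity (Theorem~\ref{measuremententsubadditive}). The paper reaches the same conclusion by a genuinely different computation: it peels off the classical pair $AC$ together, obtaining $H(A|BC)=H(A|C)+\sum_{e,g}\omega^{AC}(eg)H(\omega^{B|eg})-\sum_g\omega^C(g)H(\omega^{B|g})$, and then bounds the residual by zero using \emph{concavity} of the measurement entropy (Theorem~1) rather than subadditivity. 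Your decomposition is the more economical of the two---as you correctly observe, it needs only $C$ classical, whereas the paper's route genuinely requires both $A$ and $C$ classical in order to apply the chain rule with $AC$ as the classical factor.

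One point does need correction. Your last step, invoking the equivalence in Lemma~\ref{strongsublemma} to pass from condition~(d) to the displayed inequality $H(A|BC)\leq H(A|B)$, is not valid for a fixed triple: unpacking the definition, $I(A:B|C)\geq 0$ is the statement $H(A|BC)\leq H(A|C)$, which is condition~(b) only after interchanging the roles of $B$ and $C$. In fact the displayed inequality can fail under the stated hypotheses---relabel $B\leftrightarrow C$ in Example~\ref{failurestrongsubaddexample} to get $A,C$ classical bits, $B$ a squit, and compute $H(A|BC)=1>0=H(A|B)$. This is not so much a defect of your argument as a typo in the lemma statement: the paper's own proof also terminates with ``$H(A|BC)\leq H(A|C)$'', and that is the form of the inequality that both you and the paper have actually established.
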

\begin{proof}
Let $A = \{E\}$, $C = \{G\}$, and let the joint state
of $ABC$ be $\omega^{ABC}$. Then the marginal state of $BC$ satisfies
$\omega^{BC}(fg) = \omega^C(g) \cdot \omega^{B|g}(f)$ where, for all $g
\in G$
\[
\omega^{B|g} = \sum_{e \in E} \frac{\omega^{AC}(eg)}{\omega^{C}(g)} \omega^{B|eg}.
\]

By Lemma~\ref{goodgollymissmolly}, we have
\begin{eqnarray*}
H(A|BC) &=&  H(ABC) - H(BC)\\
 &=&  H(AB) + \sum_{g\in G}\sum_{e \in E} \omega^{AC}(eg) H(\omega^{B|eg}) \\ && - H(C) - \sum_{g\in G} \omega^C(g) H(\omega^{B|g}).
\end{eqnarray*}
We can rewrite this as
\begin{eqnarray}\label{johnnybgoode}
H(A|BC) &=& H(A|C) \nonumber\\
&& + \sum_{g\in G} \sum_{e\in E} \omega^{AC}(eg) H(\omega^{B|eg}) \nonumber \\
&& - \sum_y \omega^C(g)H(\omega^{C|g}).
\end{eqnarray}
Since measurement entropy is concave,
\[
H(\omega^{C|g}) \geq \sum_{e \in E} \frac{\omega^{AC}(eg)}{\omega^C(g)}H(\omega^{B|g}),
\]
whence
\[
\sum_{g} \omega^C(g) H(\omega^{B|g}) \geq \sum_{g \in G} \sum_{e \in E} \omega^{AC}(eg) \omega^{C|eg}.
\]
It follows that $\sum_{e,g} \omega^{AC}(eg) H(\omega^{B|eg}) - \sum_{y} \omega^C(g) H(\omega^{B|g}) \leq 0$, which, combined with Equation~(\ref{johnnybgoode}), gives the desired result that $H(A|BC) \leq H(A|C)$.
\end{proof}

\subsection{Data Processing and the Holevo Bound}
\label{holevo}

A fundamental result in quantum information theory, the {\em Holevo bound}, asserts that if Alice prepares a quantum state $\rho = \sum_{x \in E} p_x \rho_x$ for Bob, then, for any  measurement $F$ that Bob can make on his system,
\[
I(E:F) \leq \chi,
\]
where $\chi := H(\rho) - \sum_{x \in E} p_x H(\rho_x)$ (often called the {\em Holevo quantity}).

This inequality makes sense in our more general setting. Suppose that Alice has a classical system $A = (\{E\},\Delta(E))$ and Bob a general system $B$. Alice's system is to serve as a record of which state of $B$ she prepared. Hence the situation above is modeled by the joint state $\omega^{AB} = \sum_{x \in E} p_x  \delta_{x} \otimes \beta_x$, where $\delta_x$ is a deterministic state of Alice's system with $\delta_x(x) = 1$. Bob's marginal state is $\omega^{B} = \sum_{x \in E} p_x \beta_x$. By Lemma~\ref{goodgollymissmolly}, $H(\omega^{AB}) = H(A) + \sum_{x \in E} p_x H(\beta_x)$. Hence,
\begin{eqnarray*}
I(A:B) & = & H(A) + H(B) - H(AB) \\
& = & H(A) + H(B) - \left( H(A) + \sum_{x \in E} p_x H(\beta_x) \right) \\
& = & H(\omega^B) - \sum_{x \in E} p_x H(\beta_x) = \chi.
\end{eqnarray*}
Accordingly, the content of the Holevo bound is simply that the mutual information between the measurement of Alice's classical system and any measurement on Bob's system is no greater than $I(A:B)$,
\[
I(E:F) \leq I(A:B).
\]

While this is certainly natural, it does not always hold.

\begin{example} (Failure of the Holevo bound.)
Let $A$ be a classical bit, $A = \{\{0,1\}\}$ and $B$ a squit, $B = \{F = \{f,f'\}, G = \{g,g'\}\}$, and consider the state
\[
\begin{array}{l|cc|cc}
     &    f    &    f'     &    g    &   g'  \\
     \hline
0  & 1/2   &  0    &  1/2  &  0     \\
1 &  1/2   &  0    &   0  &  1/2
\end{array}
\]
It is easy to check that $H(A) = H(AB) = H_G(B) = 1$, and $H(B) = H_{F}(B) = 0$. Hence,
\[
I(A:B) = H(A) + H(B) - H(AB) = 1 + 0 - 1 = 0
\]
while
\[
I(E:F) = 1 + 1 - 1 = 1 > 0.
\]
\end{example}

Both strong subadditivity and the Holevo bound are instances of a more
basic principle.  The {\em data processing inequality} (DPI) asserts
that, for any systems $A$ and $B$ and any physical process ${\cal E} :
B \rightarrow C$,\footnote{Clearly, the content of the DPI depends on
  the definition of mutual information, which in turn depends on the
  definition of entropy used. As before, we continue to define mutual
  information in terms of the measurement entropy, as in
  Definition \ref{mutualinfdefgeneral}. But note that if the DPI fails for one
  definition of mutual information, the corresponding condition may
  hold for another - indeed this may be a reason to prefer the
  latter. In this paper, we omit a formal treatment of ``processes''.}
\[
I(A:{\cal E}(B)) \leq I(A:B).
\]
The strong subadditivity of entropy amounts to the DPI for the process that simply discards a system (the \emph{marginalization map} $BC \rightarrow C$). The Holevo bound is the DPI for the special case of measurements, which can be understood as processes taking a system into a classical system which records the outcome.

It seems reasonable that discarding a system, or performing a measurement, should be allowed processes in a physical theory. But a notion of mutual information, according to which discarding a system, or performing a measurement, causes a \emph{gain} of mutual information seems bizarre.  So it is an attractive idea that a physical theory should allow at least some definition of entropy and mutual information such that the corresponding DPI is satisfied.

\section{Information Causality}

In \cite{Petal}, Pawlowski {\em et al.} define a principle they call {\em Information Causality} in terms of the following protocol. Alice and Bob share a joint non-signaling state, known to both parties.
Alice receives a random bit string $E$ of length $N$, makes measurements, and
sends Bob a message $F$ of no more than $m$ bits. Bob receives
a random variable $G$ encoding a number $k = 1,...,N$, instructing him to guess the value of Alice's $k$th bit $E_{k}$. Bob thereupon makes a suitable measurement and, based upon its outcome, and the message from Alice, produces
his guess, $b_k$. Information causality is the condition that
\begin{equation}\label{ic}
\sum_{k=1}^{N} I(E_k : b_k | G = k) \leq m.
\end{equation}

The main result of \cite{Petal} is that if a theory contains states
that violate the CHSH inequality \cite{chsh} by more than the
Tsirel'son bound \cite{Tsirelson}, then it violates information
causality. In particular, if Alice and Bob can share PR boxes, then
using a protocol due to van Dam \cite{vanDam}, they can violate
information causality maximally, meaning that Bob's guess is correct
with certainty, and the left hand side of Equation~(\ref{ic}) is
$N$. Pawlowski \emph{et al.} also give a proof, using fairly standard
manipulations of quantum mutual information, that quantum theory {\em
  does} satisfy information causality.

Having seen how to define notions of entropy and mutual information
for general systems, it is interesting to consider where Pawlowski
\emph{et al.}'s quantum proof breaks down for some non-quantum systems
such as PR boxes. One issue is that the proof uses strong
subadditivity. As the following subsection shows, in the case where a
PR box is the shared state, the van Dam protocol itself provides an
example of the failure of strong subadditivity of the measurement
entropy. Section~\ref{dpiic} provides a converse result. Any theory
which is monoentropic, strongly subadditive and where the Holevo bound
holds, must satisfy information causality.

First, a few words about how to describe this setting in our
terminology. Let Alice and Bob share two systems $A$ and $B$, where
each of these, as usual, has an associated test space. The joint test
space of $AB$ is immaterial, as long as it includes the Foulis-Randall
product (i.e., allows all the separable measurements). The bit strings
$E$ and $F$ are regarded as classical systems in their own right and
the joint test space for a classical and a general system is, as
always, assumed to be the Foulis-Randall product. Systems $A$ and $B$
begin the protocol in some joint non-signaling state $\omega^{AB}$.

\subsection{The van Dam protocol.}\label{subsec:vanDam}

Consider a special case of the protocol described above, in which Alice and Bob share a PR box. Alice is supplied with a two-bit string $E=E_1E_2$, and transmits one bit $F$ to Bob. Let the PR box be a state of two systems $A$ and $B$, where $A$ and $B$ are squits corresponding to the test spaces $\{ \{a_1,a_1' \}, \{a_2,a_2'\} \}$ and $\{ \{b_1,b_1' \}, \{b_2,b_2'\} \}$ respectively. The joint state of $A$ and $B$ is
\[
\begin{array}{c|cc|cc}
& a_1  &  a_1'     &   a_2 & a_2' \\
\hline
b_1 &   &  1/2 &   & 1/2 \\
b_1'&   1/2  &  & 1/2  &  \\
\hline
b_2 &    & 1/2 &  1/2 & \\
b_2' & 1/2 &   &    & 1/2
\end{array}
\]
It can be verified that these outcome probabilities are indeed the PR box correlations, violating the CHSH inequality maximally. In van Dam's protocol, Alice determines the parity, $E_1\oplus E_2$ (where $\oplus$ denotes addition mod $2$). If this is zero she performs the $\{a_1,a_1'\}$ measurement on her system; if it is $1$, she performs the $\{a_2,a_2'\}$ measurement. She then sends Bob a single bit with a value equal to the parity of her outcome and $E_1$ (where unprimed outcomes correspond to $0$ and primed outcomes to $1$). Bob can then determine the value of $E_1$ by measuring $\{b_1,b_1'\}$, or the value of $E_2$ by measuring $\{b_2,b_2'\}$.

Consider now an intermediate stage in this protocol, at which Alice has measured her system, and sent the bit $F$ to Bob, who has not yet measured his system. Bob has access to systems $B$ and $F$, but does not know the outcome of Alice's measurement. Hence consider the joint state of $EFB$, averaged over the outcomes of Alice's measurement. This is easily verified to be
\[
\begin{array}{c|cc|cc}
E_1 E_2 F   &    b_1    &    b_1'     &    b_2    &   b_2'  \\
   \hline
000  & 1/8 &           & 1/8 &       \\
001 &         & 1/8   &         &  1/8 \\
111  & 1/8 &           & 1/8 &  \\
110 &         & 1/8   &         &  1/8 \\
010  & 1/8 &           &         &  1/8 \\
011 &         & 1/8   & 1/8 &  \\
101  & 1/8 &           &         & 1/8   \\
100 &         & 1/8   & 1/8 &
\end{array} \]
Minimizing over the possible measurement choices on $B$,
\[
H(E_1,F,B) = H(E_2,F, B) = H(F, B) = 2.
\]
But clearly, $H(E,B) = 3$, so
\[
I(E_1:E_2 | F,B) = 2 + 2 - 2 - 3 = -1 < 0.
\]

\subsection{Theories satisfying information causality}\label{dpiic}

As the previous subsection observes, the van Dam protocol involves a joint state on a classical-nonclassical composite system, which does not satisfy strong subadditivity of entropy. This is enough to prevent the proof of information causality going through. This subsection proves a converse result.
\begin{theorem} \label{ictheorem}
Suppose that a theory is
\begin{enumerate}
\item monoentropic, meaning that measurement entropy equals mixing entropy for all systems.
\item Strongly subadditive
\item Satisfies the Holevo bound.
\end{enumerate}
Then the theory satisfies information causality. It follows that any theory satisfying these conditions cannot violate Tsirel'son's bound.
\end{theorem}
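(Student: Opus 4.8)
The plan is to adapt the proof that Pawlowski \emph{et al.} give in the quantum case, keeping careful track of where each of the three hypotheses is used. Write $E = E_1\cdots E_N$ for Alice's random input string, $F$ for her $m$-bit message, $B$ for Bob's share of the pre-shared non-signaling state, and $b_k$ for Bob's guess of $E_k$. Note first that all the composite systems appearing below — the classical systems $E_k$, $F$, $G$ combined with one another and with the general system $B$ via the Foulis--Randall product — exist in the theory, and the relevant joint states are built from product states and convex mixtures, so every joint entropy, conditional entropy and mutual information is well-defined and the algebraic identities of Lemma~\ref{strongsublemma} hold. The target is the chain of inequalities
\[ \sum_{k=1}^{N} I(E_k : b_k \mid G{=}k)\ \leq\ \sum_{k=1}^{N} I(E_k : FB)\ \leq\ I(E : FB)\ \leq\ m, \]
where the three inequalities will come, respectively, from the Holevo bound, from strong subadditivity together with independence of the bits $E_k$, and from non-signaling together with strong subadditivity and monoentropicity.

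For the first inequality, observe that once $G{=}k$ is fixed, Bob obtains $b_k$ by reading the classical message $F$, performing a test on $B$ that may depend on $F$, and coarse-graining the outcome; this is a single test in the composite test space of $FB$. Since $E_k$ is classical, the joint state of $E_k$ and $FB$ has the form $\sum_x p(E_k{=}x)\,\delta_x \otimes \omega^{FB|E_k=x}$ to which the Holevo bound applies, and since $G$ is independent of $E$, of $AB$, and of the message, conditioning on $G{=}k$ does not affect the joint state of $(E_k,F,B)$; hence $I(E_k : b_k \mid G{=}k) \leq I(E_k : FB)$, and summing gives the first inequality. For the second inequality, expand by the chain rule $I(E:FB) = \sum_k I(E_k : FB \mid E_1\cdots E_{k-1})$ and compare term by term: the bits of $E$ are uniform and independent, so $H(E_k \mid E_1\cdots E_{k-1}) = H(E_k)$, while strong subadditivity in the form $H(E_k \mid FB\,E_1\cdots E_{k-1}) \leq H(E_k \mid FB)$ (condition (b) of Lemma~\ref{strongsublemma}) gives $I(E_k : FB \mid E_1\cdots E_{k-1}) \geq I(E_k : FB)$.

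For the third inequality, note that because $E$ is Alice's private input and the theory is non-signaling, averaging over the outcome of Alice's measurement leaves Bob's marginal state on $B$ equal to $\omega^B$ independently of the value of $E$; hence $\omega^{EB} = \omega^E \otimes \omega^B$, and since $E$ is classical, Corollary~\ref{independentsystemscorollary} gives $I(E:B)=0$. The chain rule then gives $I(E:FB) = I(E:B) + I(E:F\mid B) = H(F\mid B) - H(F\mid EB)$, where $H(F\mid B) \leq H(F) \leq m$ because strong subadditivity forbids conditioning from increasing entropy and $F$ is classical with at most $2^m$ outcomes. It therefore remains to prove $H(F\mid EB) \geq 0$, equivalently $I(F:EB) \leq H(F)$. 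By Lemma~\ref{goodgollymissmolly}, with $X = EB$ and $X^{|f}$ the conditional state of $X$ given $F{=}f$, one has $I(F:X) = H(X) - \sum_f p(F{=}f)\,H(X^{|f})$ and $X = \sum_f p(F{=}f)\,X^{|f}$. Invoking monoentropicity to replace every $H$ by the mixing entropy $S$, and using the general inequality $S\bigl(\sum_f q_f\,\sigma_f\bigr) \leq H(\{q_f\}) + \sum_f q_f\,S(\sigma_f)$ valid for arbitrary states $\sigma_f$ — which follows by splicing together $\varepsilon$-optimal pure decompositions of the $\sigma_f$ — we obtain $I(F:X) = S(X) - \sum_f p(F{=}f)\,S(X^{|f}) \leq H(\{p(F{=}f)\}) = S(F) = H(F)$. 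This closes the chain and proves information causality; the final assertion of the theorem is then immediate, since by the main result of \cite{Petal} any theory violating Tsirel'son's bound violates information causality.

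I expect the main obstacle to be precisely the last step, $H(F\mid EB) \geq 0$: it looks like a triviality about classical systems (and is automatic in classical and quantum theory), but in the general framework it is exactly the place where the structure of the theory matters, and establishing it cleanly requires both monoentropicity and the auxiliary mixing-entropy inequality above — equivalently, the fact that the Holevo quantity of an ensemble is bounded above by the Shannon entropy of its mixing weights. The remaining work — checking that all the composite systems and joint states actually live in the theory, that the various chain rules are applicable, and that conditioning on $G$ is harmless — is routine but should be carried out with some care.
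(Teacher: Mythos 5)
Your proposal is correct and follows essentially the same route as the paper: the chain $\sum_k I(E_k:b_k\mid G{=}k)\leq\sum_k I(E_k:FB)\leq I(E:FB)\leq m$, with the Holevo bound for the first step, strong subadditivity plus independence of the bits for the second, and non-signaling plus the key fact $H(F\mid EB)\geq 0$ for the third — and your derivation of that key fact (splicing $\varepsilon$-optimal pure decompositions and invoking monoentropicity) is exactly the paper's Lemma~\ref{unientlemma}. The only cosmetic slip is attributing $H(F\mid B)\leq H(F)$ to strong subadditivity when ordinary subadditivity (nonnegativity of mutual information, Theorem~\ref{measuremententsubadditive}) suffices.
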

Note that, as discussed in Section~\ref{holevo}, the second and third conditions both follow from a single assumption of a data processing inequality. Note also that in proving Theorem~\ref{ictheorem}, the condition that a theory be monoentropic is only used to establish the technical condition that $H(A|B) \geq 0$ when $A$ is classical.\footnote{This does not hold in all theories.} So the theorem would still be valid if the monoentropic assumption were replaced by a direct assumption that for classical $A$, $H(A|B) \geq 0$. Otherwise, begin with
\begin{lemma}\label{unientlemma}
Suppose that a theory is monoentropic and that $A$ is a classical system. Then $H(A|B) \geq 0$ for any system $B$.
\end{lemma}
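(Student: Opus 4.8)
The plan is to use Lemma~\ref{goodgollymissmolly} to reduce the claim to a ``chain-rule'' bound on the \emph{mixing} entropy, and then invoke monoentropicity to transfer that bound to the measurement entropy. Write $\omega^{AB}$ for the joint state, with marginals $\omega^A,\omega^B$ and conditionals $\omega^{B|e}$, $e\in E$; recall that the composite assumptions guarantee each $\omega^{B|e}\in\Omega^B$. Since $A=(\{E\},\Delta(E))$ is classical, Lemma~\ref{goodgollymissmolly} gives
\[
H(A|B)=H(AB)-H(B)=H(\omega^A)+\sum_{e\in E}\omega^A(e)\,H(\omega^{B|e})-H(\omega^B),
\]
so it suffices to prove $H(\omega^B)\le H(\omega^A)+\sum_{e}\omega^A(e)\,H(\omega^{B|e})$. (Note that concavity of measurement entropy runs the wrong way here: it only yields $H(\omega^B)\ge\sum_e\omega^A(e)H(\omega^{B|e})$. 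The extra term $H(\omega^A)$ will more than compensate, but the argument for this will pass through mixing entropy.)

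Next I would establish the analogous inequality for mixing entropy: for any state $\beta=\sum_e p_e\beta_e$ of $B$ with $p\in\Delta(E)$,
\[
S(\beta)\le H(p)+\sum_{e}p_e\,S(\beta_e),
\]
where $H(p)$ denotes the Shannon entropy of $p$. This is immediate from the definition of $S$: choosing, for each $e$ with $p_e>0$, a finite pure decomposition $\beta_e=\sum_j q_{j|e}\gamma_{ej}$ and concatenating yields the pure decomposition $\beta=\sum_{e,j}p_e q_{j|e}\,\gamma_{ej}$, whose weights $r_{ej}=p_e q_{j|e}$ satisfy $H(r)=H(p)+\sum_e p_e\,H(q_{\cdot|e})$ by the chain rule for Shannon entropy. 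Taking, independently for each $e$, the infimum over such sub-decompositions gives the displayed bound (up to the usual $\varepsilon$-argument for the infima, which is then sent to $0$).

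Applying this with $\beta=\omega^B$, $p=\omega^A$, $\beta_e=\omega^{B|e}$, and noting that $H(\omega^A)=S(\omega^A)$ trivially since $A$ is classical, gives $S(\omega^B)\le H(\omega^A)+\sum_e\omega^A(e)\,S(\omega^{B|e})$. Now invoke monoentropicity: $S=H$ on every state of $B$, so $S(\omega^B)=H(\omega^B)$ and $S(\omega^{B|e})=H(\omega^{B|e})$ for all $e$. Substituting yields $H(\omega^B)\le H(\omega^A)+\sum_e\omega^A(e)\,H(\omega^{B|e})$, which combined with the identity of the first paragraph gives $H(A|B)\ge 0$.

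I expect the only real subtlety to be the bookkeeping around the infimum defining $S$ — justifying that the sub-decomposition of each $\omega^{B|e}$ may be optimized separately and the resulting $\varepsilon$'s passed to the limit — together with the (already-granted) facts that each $\omega^{B|e}$ is a legitimate state of $B$ and that monoentropicity is assumed for \emph{all} systems, hence for $B$. No genuinely hard analytic step is needed: the entire content is that monoentropicity upgrades the elementary chain-rule bound on mixing entropy to the required bound on measurement entropy.
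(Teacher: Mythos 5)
Your proof is correct and is essentially the paper's own argument: both reduce the claim via Lemma~\ref{goodgollymissmolly}, prove the chain-rule bound $S(\omega^B)\le H(\omega^A)+\sum_e\omega^A(e)S(\omega^{B|e})$ by concatenating $\epsilon$-optimal pure decompositions of the conditional states and applying the Shannon chain rule, and then use monoentropicity to transfer the bound to $H$. The only difference is the (immaterial) order in which monoentropicity is invoked.
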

\begin{proof}(Lemma~\ref{unientlemma}.)
Suppose that $A$ is a classical system, and that the joint state of $AB$ is $\omega^{AB}$. If the measurement and mixing entropies are equal, then Lemma~\ref{goodgollymissmolly} immediately gives
\[
S(AB) = S(A) + \sum_x p_x S(\beta_x),
\]
where $p_x = \omega^A(x)$ and $\beta_x$ is the state of $B$ conditioned on $x$. Recall that the mixing entropy of a state is defined in terms of an infimum over convex decompositions into pure states. For a fixed $\epsilon$, call a convex decomposition of a state $\omega$ into pure states $\epsilon$-\emph{optimal} if the Shannon entropy of the coefficients is $\leq S(\omega) + \epsilon$. For any $\epsilon>0$, there is an $\epsilon$-optimal decomposition. Let
\[
\beta_x = \sum_y q_{y|x} \beta_{xy}
\]
be an $\epsilon$-optimal convex decomposition of $\beta_x$ into pure states $\beta_{xy}$. It follows that
\[
\omega^B = \sum_x \sum_y p_x q_{y|x} \beta_{xy}
\]
is a (possibly far from optimal) convex decomposition of $\omega^B$ into pure states. Hence $S(B)$ is less than or equal to the Shannon entropy of the coefficients on the right hand side. Therefore
\begin{eqnarray*}
S(B) &\leq& H(p_x) + \sum_x p_x H(q_{y|x}) \\
&\le & S(A) + \sum_x p_x (S(\beta_x)+\epsilon) \\
&=& S(AB)+\epsilon.
\end{eqnarray*}
Since this holds for any $\epsilon$, we have $S(B)\leq S(AB)$ and $S(A|B) = S(AB) - S(B) \geq 0$ as required.
\end{proof}

Given Lemma~\ref{unientlemma}, the proof of Theorem~\ref{ictheorem} is essentially a reconstruction of the quantum argument of Appendix~A of \cite{Petal}, adapted to the broader setting of non-signaling states on test spaces. In its form the proof is the same, but great care must be taken at each step to ensure that the relevant properties of entropies and mutual information still hold. Many of the steps still go through in virtue of generic properties of the measurement entropy. The explicit assumptions of Theorem~\ref{ictheorem} are needed for the rest.

\begin{proof}(Theorem~\ref{ictheorem}.)
Assume that Alice and Bob share a joint system $AB$. Consider the $N$-bit string which Alice receives as a classical system $E$, and consider the $m$-bit message which Alice sends to Bob as a classical system $F$. Let $E_k$ denote Alice's $k$th bit. Consider the stage of the protocol where Alice has measured system $A$, and sent $F$ to Bob, but Bob has not yet measured system $B$. Bob has control of systems $F$ and $B$ at this point, and does not know the outcome of Alice's measurement. Hence the strategy is to consider the joint state of the systems $E$, $F$ and $B$, averaged over Alice's outcomes.

The first goal is to show that the joint state at this point satisfies
\begin{equation}\label{mutinflessm}
I(E:FB) \leq m.
\end{equation}
By the fact that the initial state of $AB$ is non-signaling, $E$ is independent of $B$. Therefore Corollary~\ref{independentsystemscorollary} yields $I(E:B) = 0$. Using this, along with the definitions and straightforward algebraic manipulation, we get
\begin{eqnarray*}
I(E:FB) &=& I(E:B) + I(E:F|B) \\
&=& I(E:F|B) \\
&=& I(EB:F) - I(B:F).
\end{eqnarray*}
By Theorem~\ref{measuremententsubadditive}, mutual information is non-negative, so
\begin{equation}
I(E:FB) \leq I(EB:F).
\end{equation}
Now, $I(EB:F) = H(EB) + H(F) - H(EFB) = H(F) - H(F|EB)$. By the assumption that the theory is monoentropic, and Lemma~\ref{unientlemma}, $H(F|EB) \geq 0$. So
\begin{equation}
I(EB:F) \leq H(F) \leq m.
\end{equation}
This gives Equation~(\ref{mutinflessm}).

The next step is to establish
\begin{equation}\label{mutinfsumlessmutinf}
\sum_ {k=1}^N   I(E_k:FB) \leq I(E:FB).
\end{equation}
Rearrangement of definitions yields
\begin{eqnarray*}
I(E:FB) &=& I(E_1\ldots E_N:FB)\\
&=& I(E_1:FB) + I(E_2\ldots E_N:FB|E_1)
\end{eqnarray*}
and
\begin{eqnarray*}
I(E_2\ldots E_N:FB|E_1) &=& \\
&& \hspace{-80pt} I(E_2\ldots E_N:FBE_1) - I(E_2\ldots E_N:E_1).
\end{eqnarray*}
Since the distribution on $E$ is uniform (the bits are independent),
$I(E_2\ldots E_N:E_1) = 0$. Hence,
\[
I(E:FB) = I(E_1 :FB) + I(E_2\ldots E_N:FBE_1).
\]
By strong subadditivity,
\[
I(E_2\ldots E_N:FBE_1) \geq I(E_2\ldots E_N:FB).
\]
So
\[
I(E:FB) \leq I(E_1 : FB) + I(E_2\ldots E_N:FB).
\]
Applying this inequality recursively gives
Equation~(\ref{mutinfsumlessmutinf}).

Finally, consider the last stage of the protocol. If Bob is instructed
to guess the $k$th bit, then, depending on the message $F$, he
measures system $B$. This can be seen as a single joint measurement
$X_k$ on the system $FB$. \footnote{Recall that the joint test space
  for a classical and general system is always assumed to be the
  Foulis-Randall product, and that this includes measurements of the
  form: measure $F$ first and, depending on the outcome, measure $B$.}
The Holevo bound, combined with
Equations~(\ref{mutinflessm},\ref{mutinfsumlessmutinf}) gives
\[
\sum_{k=1}^N I(E_k:X_k) \leq m.
\]
Finally, Bob outputs a guess $b_k$ for the value of $E_k$, where the
guess depends on $k$ and on the outcome of the measurement $X_k$. The
usual data processing inequality applied to classical mutual
information yields
\[
\sum_{k=1}^N I(E_k:b_k|G=k) \leq m,
\]
which is information causality.
\end{proof}
\vspace{10pt}

\section{Conclusions, Discussion and Further Questions}

We have defined preparation and measurement based generalizations of
quantum and classical entropy and mutual and conditional information,
and studied some of their basic properties.  We called theories in which
they coincide \emph{monoentropic}, and showed that if they in addition
satisfy the data processing inequality (or at least its corollaries
strong subadditivity and the generalized Holevo bound), Pawlowski
\emph{et al.}'s information causality principle holds.  By their remarkable
result that any correlations violating the Tsirel'son bound can be
used to violate information causality, it follows that monoentropic
theories satisfying data processing must, like quantum theory, obey
the Tsirel'son bound.  Monoentropicity, is a strong constraint on
theories, as we have shown by establishing that it fails for
all polytopes except simplices.

Our results indicate that it is interesting and profitable to
develop notions of entropy, and allied notions of conditional entropy
and mutual information, for abstract probabilistic models. This paper
should be regarded as only a preliminary exploration of this
possibility.

A natural direction for further research is to study data compression
and channel capacities in the abstract setting of this paper.  It is
natural to seek a measure of entropy that governs the rate of
high-fidelity data compression, as Shannon and von Neumann entropy do
in classical and quantum theory.  A first step toward exploring {\em
  classical} channel capacities in generalized probabilistic theories
might be to identify sufficient conditions for the Holevo bound to
hold.  This is related to the issue of finding an operationally
motivated definition of mutual information.  Arguably,
a properly motivated notion of mutual information should {\em
  manifestly} be monotonic.  Of course, the monotonicity of quantum
mutual information---equivalently, the strong subadditivity of quantum
entropy---is not manifest from its usual functional form.
Still, the outright failure of the measurement-entropy-based mutual
information to satisfy monotonicity in some cases raises a question as
to its significance.  Although in such cases measurement-based mutual
information cannot be used to establish information causality through
a proof parallel to Pawlowski \emph{et al.}'s quantum proof, it could
be that IC nevertheless holds in some such cases.  One should be
cautious, though, about dismissing natural generalizations of
classical quantities on the grounds that they fail to satisfy
intuitively compelling properties.  \red A case in point is the
history of skepticism, based on the fact that it can be negative,
about the operational significance of conditional information in
quantum information theory.  It was known for many years that the
conditional mutual information can be negative, but it was eventually
shown to have an operational interpretation, involving the rate for
quantum state merging protocols.  \black It is also good to keep in
mind that different operational motivations might turn out to be
naturally associated with different entropic quantities, each with
reasonable claim to be called mutual information.

At a more fundamental level, one would like to understand better the
operational significance of various notions of entropy for abstract
probabilistic models and theories.  It is likely that the entropic
quantities we have discussed here, measurement and mixing entropy,
will turn out not to be best notions of entropy to use in many
situations.  For example, in Appendix \ref{appendix:linearization}, we
considered a variation (or perhaps better, a specialization) of the
notion of measurement entropy that is more tightly coupled to the
geometry of the state space.

We have seen that, taken together, the conditions of monoentropicity,
strong subadditivity, and the Holevo bound, imply information
causality. It is not out of the question that some subset of these
conditions would suffice (especially since we need only very special
cases of strong subadditivity).  Alternatively, it would be of
interest to find a single, reasonably simple physical postulate that
would imply all three of these conditions. It seems plausible that
such a postulate exists.  On the one hand, strong subadditivity and the
Holevo bound are both special cases of the data processing inequality,
which in turn can be derived (as we will detail in a future paper)
from the assumption that arbitrary processes can be dilated to
reversible ones.  On the other hand, as we show in Appendix B,
monoentropicity can be derived from conditions of a similar flavor,
involving the dilatability of mixed states to pure states with 
a ``marginal steering'' property.  Another avenue to explore is the
consequence of monoentropicity that is needed for the IC proof:
positivity of conditional information when a classical system is
conditioned upon a general one.  Although its operational
interpretation is not evident at first blush, it warrants further
study.

We hope to discuss all of these matters in detail in a future paper.

\acknowledgments

This research was supported by the United States Government through
grant OUR-0754079 from the National Science Foundation.  It was also
supported by Perimeter Institute for Theoretical Physics.  Research at
Perimeter Institute is supported by the Government of Canada through
Industry Canada and by the Province of Ontario through the Ministry of
Research and Innovation.  This work was also supported by the EU's
FP6-FET Integrated Projects SCALA (CT-015714) and QAP (CT-015848), and
the UK EPSRC project QIP-IRC.  Jonathan Barrett is supported by an
EPSRC Career Acceleration Fellowship.  At IQC, Matthew Leifer was supported
in part by MITACS and ORDCF.  At Perimeter Institute, Matthew Leifer was
supported in part by grant RFP1-06-006 from The Foundational Questions Institute (fqxi.org).

\appendix

\section{Non-concavity of mixing entropy}\label{mixnonconcave}
In this appendix we prove Theorem \ref{mixnonconcavetheorem}, which
states that the mixing entropy is not concave for nonsimplicial
polytopes.  As a preliminary to the proof, we state some basic
definitions and facts that we will use.  A {\em face} of
a convex set $C$, which is a set $F \subseteq C$ such that every $x
\in C$ that can appear in a convex decomposition of something in $F$,
is also in $F$.  A {\em maximal face} of $C$ is one that is not a
proper subset of any face in $C$ other than $C$ itself.  An {\em
  exposed face} of $C$ is a subset of $C$ that is the intersection of
$C$ with a hyperplane supporting it (such a subset is easily shown to
be a face).  All faces of a polytope are exposed, and the maximal ones
have affine codimension $1$, i.e. their spans are affine hyperplanes.
We denote the affine space generated by a set $S$ by $\aff(S)$, the
linear span of $S$ by $\lin(S)$, and the cone generated by $S$
(i.e. the set of nonnegative linear combinations of elements of $S$)
by $\cone(S)$.  Note that when a subset of a real vector space
contains $0$, $\aff(S) = \lin(S)$.  The {\em relative interior} of a
convex compact set $C$ is the interior of $C$ when it is considered as
a subset of $\aff(C)$.  Finally, we'll use the term {\em $Z$-ball},
where $Z$ is an affine subspace of the ambient vector space, to mean a
subset of $Z$ that is a ball in $Z$.

The proof relies on the following lemma, proven below.
\begin{lemma}
The mixing entropy fails to be concave for any d-dimensional nonsimplicial
polytope all the maximal faces of which are (d-1)-dimensional simplices.
\end{lemma}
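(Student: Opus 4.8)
The plan is to strip the statement down to its combinatorial core via two standard reductions and then induct on the dimension $d$; throughout, write $S_{P}$ for the mixing‑entropy function of a polytope $P$ (decompositions into the vertices of $P$). \emph{Reduction (i):} if $G$ is a face of $P$, then any convex decomposition of a point of $G$ into vertices of $P$ uses only vertices of $G$ (a face is by definition an extremal subset), so $S_{P}|_{G}=S_{G}$; hence if $S_{G}$ fails to be concave for some proper face $G$ then so does $S_{P}$. \emph{Reduction (ii):} on a $k$‑simplex every point has a unique decomposition — its barycentric coordinates — so there $S$ is the Shannon entropy of those coordinates, which is concave, with value $\log(k+1)$ at the barycenter. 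Since all facets of $P$ are $(d-1)$‑simplices, (i)+(ii) give in particular that each facet barycenter $b_{F}$ has $S_{P}(b_{F})=\log d$; these are the ``high‑entropy'' points we will play against.

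For the induction, the base case is $d=2$, i.e.\ a polygon with $n\ge 4$ vertices; this is handled by an explicit construction, with the square of the text as prototype (for a centrally symmetric $4$‑vertex sub‑configuration the midpoint of the two adjacent edge‑midpoints through the shared vertex lands off‑centre on a diagonal, so its mixing entropy is $<1<\tfrac12 S+\tfrac12 S$), and for a general polygon one must argue with a suitably placed bisected chord (or a limiting argument). For $d\ge 3$: a $d$‑polytope that is both simplicial and simple is a simplex, so $P$ has a vertex $v$ of degree $r\ge d+1$, whose vertex figure $P/v$ is a $(d-1)$‑polytope that is nonsimplicial (it has $r\ge(d-1)+2$ vertices) with all facets simplices; by the inductive hypothesis $S_{P/v}$ is not concave. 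Realize the star of $v$ as a pyramid $Q\subseteq P$ with apex $v$ over a copy of $P/v$, in such a way that each facet $F_{i}$ of $P$ through $v$ (necessarily a $(d-1)$‑simplex on $v$ and $d-1$ of its neighbours) is a face of $Q$ as well. Every $x\in Q$ has a unique representation $x=(1-s)v+s y$ with $y\in P/v$, $s\in[0,1]$, and the grouping property of Shannon entropy yields
\[
S_{Q}\big((1-s)v+s y\big)=h(s)+s\,S_{P/v}(y),\qquad h(s):=-s\log s-(1-s)\log(1-s).
\]
Writing the midpoint of two points of $Q$ in pyramid form and using concavity of $h$ shows that $S_{Q}$ is concave \emph{iff} $S_{P/v}$ is; hence $S_{Q}$ is not concave.

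It remains to transfer non‑concavity from $Q$ to $P$. One always has $S_{P}\le S_{Q}$ on $Q$, with possible strict inequality in the interior because a $P$‑decomposition may reach out to vertices not in $Q$. The observation that saves the day is that we only need equality at the \emph{two outer points} of a non‑concavity witness: if $y_{1},y_{2}$ lie in facets of $P/v$ coming from facets $F_{1},F_{2}$ of $P$ through $v$, then $x_{i}:=(1-s)v+s y_{i}$ lies in the simplex face $F_{i}$ of $P$, which is also a face of $Q$, so $S_{P}(x_{i})=S_{F_{i}}(x_{i})=S_{Q}(x_{i})$; meanwhile $S_{P}(\bar x)\le S_{Q}(\bar x)<\tfrac12\big(S_{Q}(x_{1})+S_{Q}(x_{2})\big)=\tfrac12\big(S_{P}(x_{1})+S_{P}(x_{2})\big)$ for $\bar x=\tfrac12(x_{1}+x_{2})$, which is exactly a failure of concavity for $S_{P}$. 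Thus the inductive step goes through provided one can choose the non‑concavity witness for $S_{P/v}$ with its two outer points on the boundary of $P/v$ — in facets of the type above.

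I expect the main obstacles to be precisely (a) producing such a ``boundary witness'': this should hold because a non‑concave $S_{P/v}$ must fall strictly below its concave hull and that hull is pinned down by boundary data, but it needs a genuine argument; (b) the base case $d=2$; and (c) a realization subtlety — the naïve choice $\conv(\text{neighbours of }v)$ is genuinely $(d-1)$‑dimensional only when $v\notin\aff$ of its neighbours, which fails, e.g., for neighbourly polytopes, so one must instead use a small section of $P$ through $v$ and then compare, carefully, the barycentric‑coordinate systems of $F_{i}$ inside $P$ and inside $Q$ (or treat the neighbourly case separately). By contrast, the pyramid identity and the equivalence ``$S_{Q}$ concave $\iff S_{P/v}$ concave'' are the clean skeleton on which the rest hangs.
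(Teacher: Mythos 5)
Your strategy (induct on dimension via the vertex figure $P/v$ and a pyramid over it) is genuinely different from the paper's, which gives a direct, non-inductive construction: it picks two adjacent simplicial facets $F_1,F_2$, their barycenters $\rho_1,\rho_2$ and the barycenter $\rho_3$ of the ridge $F_1\cap F_2$, a vertex $V$ outside both facets, and locates the witness point $\rho$ as the far endpoint of the segment $T\cap H$ where $T=\conv(\rho_1,\rho_2,\rho_3)$ and $H=\conv(F_1\cap F_2,V)$; since $H$ is a $(d-1)$-simplex with $d$ vertices containing $\rho$, one gets the upper bound $S(\rho)<\log d$ (or $\leq\log(d-1)$) against $\sum_ip_iS(\rho_i)$ computed exactly from facet barycenters. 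Unfortunately your route has gaps that are not just technical polish. The central one is the transfer step: you need a pyramid $Q$ with apex $v$ over a copy of $P/v$ such that each facet $F_i$ of $P$ through $v$ is a face of $Q$, and these two requirements are in tension. If $Q$ is the cone over a cross-section $P\cap h$, then $F_i\not\subseteq Q$, only its truncation $\conv(v,F_i\cap h)$ is a face of $Q$, and that truncation is a simplex on \emph{different} vertices than $F_i$; since mixing entropy on a simplex is the entropy of barycentric coordinates, $S_Q(x_i)\neq S_{F_i}(x_i)=S_P(x_i)$ in general (rescaling the non-apex vertices along the edges from $v$ changes the coordinate vector and hence its Shannon entropy). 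If instead $Q=\conv(\{v\}\cup N(v))$, then the $F_i$ are faces of $Q$, but $Q$ need not be a pyramid at all ($v$ may lie in $\aff(N(v))$), and even when it is, $\conv(N(v))$ need not be combinatorially $P/v$ — e.g.\ for a degree-$4$ vertex of a simplicial $3$-polytope the vertex figure is a quadrilateral while $\conv(N(v))$ is typically a tetrahedron, i.e.\ a simplex, killing the induction. You flag this as obstacle (c) but offer no resolution, and I do not see one that avoids redoing the explicit barycenter computation.

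The second gap is obstacle (a), the ``boundary witness.'' Failure of concavity of a function on a convex body does \emph{not} imply failure witnessed by a triple whose outer points lie on the boundary: already on an interval, a nonnegative function vanishing at both endpoints can be non-concave only via interior triples. So you would have to strengthen the induction hypothesis to carry a specific witness (e.g.\ with outer points equal to facet barycenters, as in the paper) through the vertex-figure passage — and facet barycenters of $P/v$ do not map to facet barycenters of $P$ under either realization of $Q$, which loops back to the first gap. Finally, the base case $d=2$ for a general polygon is only gestured at; note that the naive estimate for the midpoint $\gamma$ of two adjacent edge-midpoints via a chord through the shared vertex gives $S(\gamma)\leq h(t)+t$, which is not $<1$ without further work, whereas the paper's two-case analysis (is $\rho$ at the far vertex $V$ of $H=[B,V]$, or strictly interior to it and hence an off-center point of that segment?) closes this uniformly in $d$. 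The clean parts of your write-up — the pyramid entropy identity $S_Q((1-s)v+sy)=h(s)+sS_{P/v}(y)$ and the reduction to faces — are correct, but the skeleton connecting them to the conclusion is missing load-bearing pieces.
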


We begin by proving the theorem.

\begin{proof} (Theorem). First note that any counterexample to concavity of
the mixing entropy in a polytope $S$ will also be a counterexample in a
polytope $S^{\prime }$ that has $S$ as a face. \ This follows from the fact
if $S$ is a face, only states in $S$ can appear in convex decompositions of
states in $S$. \ The proof of the theorem is by induction.

Suppose as our induction hypothesis that the mixing entropy fails to
be concave for nonsimplicial polytopes in dimension $d$. \ For every
polytope in dimension $d+1$ either (i) every maximal face is
simplicial, or (ii) there is a maximal face that is
nonsimplicial. \ If case (ii) applies, then there is a face that
constitutes a nonsimplicial polytope of dimension $d$ and by our
induction hypothesis, the mixing entropy fails to be concave for this
face. \ If case (i) applies, then the polytope satisfies the
conditions of the lemma and the mixing entropy fails to be concave by
virtue of the lemma.

To complete the inductive argument we need to show that the mixing entropy
fails to be concave for nonsimplicial polytopes in dimension $d=2$,
the lowest dimension in which there exist nonsimplicial polytopes. \ This
follows from the fact that all of the maximal faces of a 2-dimensional
nonsimplicial polytope are line segments, which are simplices, so that the
conditions of the lemma apply.
\end{proof}

We now prove the lemma.

\begin{proof} (Lemma). Suppose $S$ is a $d$-dimensional polytope that
satisfies the conditions of the lemma, that is, it is nonsimplicial, but all
of its maximal faces are simplicial.
In this case, one can always find two
maximal faces ($(d-1)$-dimensional simplices), $F_{1}$ and $F_{2}$, whose intersection,
$F_{1}\cap F_{2}$, is a $(d-2)$-dimensional simplex.  We define $V_1$ to be the vertex of $F_1$ that is not contained in $F_2 \cap F_2$.  $V_2$ is defined similarly.
Let $\rho _{1}$ be the barycenter of $F_{1},$ $\rho _{2}$ the barycenter of $F_{2}$ and $\rho _{3}$ the barycenter of $F_{1}\cap F_{2}.$
The figures provide examples of pairs of such faces in different dimensions.
\begin{figure}[H]
\centering
\includegraphics[scale=0.6]{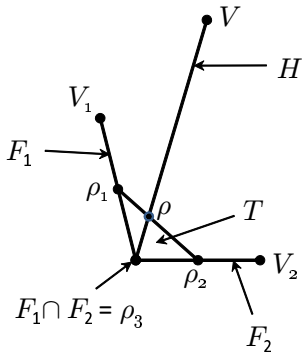}
\caption{Example of failure of concavity
for a 2d nonsimplicial polytope.}
\label{FIG:2dFC}
\end{figure}
\begin{figure}[H]
\centering
\includegraphics[scale=0.5]{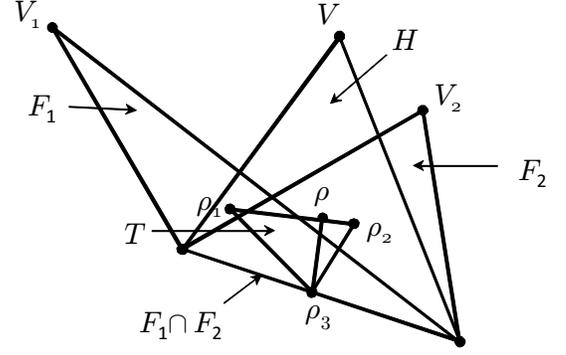}
\caption{Example of failure of concavity
for a 3d nonsimplicial polytope. Here $\protect\rho $ is in the interior of $%
H.$}
\label{FIG:3dFC1}
\end{figure}
\begin{figure}[H]
\centering
\includegraphics[scale=0.5]{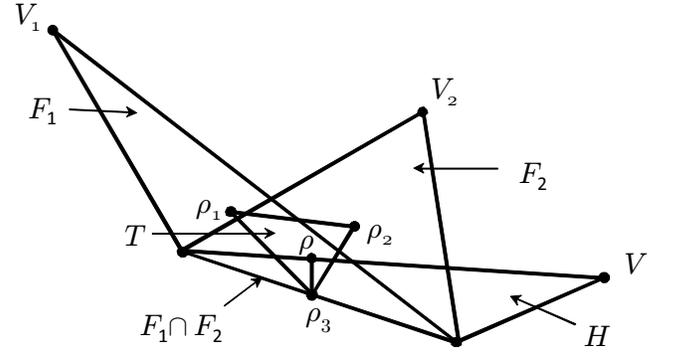}
\caption{Example of failure of concavity for a 3d nonsimplicial polytope.
Here $\protect\rho $ is on the boundary of $H.$}
\label{FIG:3dFC2}
\end{figure}
\black
Let ${V}$ be a vertex of $S$ that is not contained in $F_{1}$ or in $F_{2}.$ \
Such a vertex always exists because if it did not, then the total number of
vertices in $S$ would be $d+1$ and $S$ would be a simplex, contrary to
hypothesis.

Define the $(d-1)$-dimensional polytope $H$ to be the convex hull of
$F_{1}\cap F_{2}$ and ${V}.$ \ Note that $H$ is a simplex.

Define $T$ to be the triangle with vertices $\rho _{1},$ $\rho _{2}$ and $%
\rho _{3}.$

Define $L$ to be the intersection of $T$ and the $(d-1)$-dimensional
polytope $H$.  $L$ is a line segment; we defer establishing this to the end
of the proof, since it is somewhat technical.

Finally, we define the state $\rho$ for which the mixing entropy will fail
to be concave. \ It is defined as the second vertex of $L,$ that is, $L$ is
the line segment extending from $\rho _{3}$ to $\rho.$

Now the proof proceeds differently depending on whether $\rho$ is in the
interior or in the boundary  (relative to $\aff(H)$) of $H$.

i) $\rho $ is in the relative boundary of $H$.

In this case, $\rho $ lies on a face of $H.$ \ Because $H$ is a simplex of
dimension $d-1,$ every such face has $d-1$ vertices and consequently the
mixing entropy of $\rho $ satisfies%
\begin{equation}
S\left( \rho \right) \leq \log (d-1).  \label{eq:LHS1}
\end{equation}%
By definition, $\rho \in T,$ so that it is a convex combination of $\rho
_{1},$ $\rho _{2}$ and $\rho _{3},$%
\begin{equation}
\rho =p_{1}\rho _{1}+p_{2}\rho _{2}+p_{3}\rho _{3},
\end{equation}%
where the $p_{i}$ form a probability distribution. \ Because $\rho _{1}$ $%
(\rho _{2})$ is the barycenter of $F_{1}$ $(F_{2}),$ which has $d$ vertices,
its mixing entropy is%
\begin{equation}
S(\rho _{1})=S(\rho _{2})=\log d.
\end{equation}%
while because $\rho _{3}$ is the barycenter of $F_{1}\cap F_{2}$ with $d-1$
vertices, we have%
\begin{equation}
S\left( \rho _{3}\right) =\log \left( d-1\right) .
\end{equation}%
Recalling that $L$ is $1$-dimensional, we know that $\rho \neq \rho _{3},$
or equivalently, $p_{1}+p_{2}>0,$ which implies that%
\begin{equation}
\sum_{i=1}^3 p_{i}S\left( \rho _{i}\right) >\log \left( d-1\right) .
\label{eq:RHS1}
\end{equation}
\ From Eqs. (\ref{eq:LHS1}) and (\ref{eq:RHS1}) we infer that%
\begin{equation}
S\left( \rho \right) <\sum_{i}p_{i}S\left( \rho _{i}\right) ,
\end{equation}%
that is, the mixing entropy fails to be concave.

ii) $\rho $ is among the relative interior points of $H$.

In this case,%
\begin{equation}
\rho = p_{1}\rho _{1} + p_{2}\rho _{2},  \label{eq:decomprho}
\end{equation}%
that is, $\rho $ lies on the line segment defined by $\rho _{1}$ and $\rho
_{2}.$ \ The proof of (\ref{eq:decomprho}) is by contradiction.  Suppose that $\rho $ lies in the
relative
interior of $T$ as well as in the relative interior of $H.$ \ Then, there
is an $\aff(T)$-ball $B_1$
around $\rho $ contained in the relative interior of $T$ and an
$\aff(H)$-ball around $\rho $ contained in the interior of $H.$ \
$B_1 \intersect B_2$
is a line segment $L_{\rho }\subset L$ with midpoint $\rho$. \ But the fact
that $\rho $ is the midpoint of $L_{\rho }$ contradicts the fact that it is
an extremal point of $L.$

It follows from Eq. (\ref{eq:decomprho}) and the fact that
$\rho_1, \rho_2$ are barycenters of $d-1$-dimensional simplices that
\begin{equation}
\sum_{i}p_{i}S(\rho _{i}) \left(= p_1 S(\rho_1) + p_2 S(\rho_2)\right)
=\log d.  \label{eq:RHS2}
\end{equation}

Next, we show that $\rho $ cannot be the barycenter of $H.$ \ We begin by
demonstrating that $\rho $ is the barycenter of $H^{\prime }\equiv \mathrm{%
conv}\left( F_{1}\cap F_{2},{V}^{\prime }\right) ,$ where ${V}^{\prime }\equiv
p_{1}{V}_{1} + p_{2}{V}_{2}$ and where $\mathrm{conv}\left( S,S^{\prime }\right) $
is the convex hull of $S\cap S^{\prime }.$ \ Letting $\mathrm{bary}\left(
F\right) $ denote the barycenter of $F,$ the proof is as follows, writing
$x_i, i \in \{1,...,d-1\}$, for the $d-1$ vertices of $F_1 \intersect F_2$.%
\begin{eqnarray}
\rho  &=&p_{1}\mathrm{bary}\left( F_{1}\right) +p_{2}\mathrm{bary}\left(
F_{2}\right)  \\
&=& p_1 \frac{1}{d}  (\sum_{i=1}^{d-1} x_i + V_1)
+ p_2 \frac{1}{d}  (\sum_{i=1}^{d-1} x_i + V_2) \\
&=&  \frac{1}{d} ( \sum_{i=1}^{d-1} x_i + p_1 V_1 + p_2 V_2 ) \\
&=&\mathrm{bary}\left( \mathrm{conv}\left( F_{1}\cap
F_{2},p_{1}{V}_{1}+p_{2}{V}_{2}\right) \right)  \\
&=&\mathrm{bary}(H^{\prime }).
\end{eqnarray}%
However, $H^{\prime } = \conv \left(F_1 \intersect F_2, V'\right)$
has a different barycenter from  $H\equiv \mathrm{conv}\left(
F_{1}\cap F_{2},{V} \right) $ because ${V}$ is distinct from ${V}^{\prime }.$ \ The
latter follows from the fact that ${V},$ ${V}_{1}$ and ${V}_{2}$ are all vertices
of the nonsimplicial polytope $S$, and consequently ${V}$ cannot be in
$\conv(F_1 \intersect F_2, V_1, V_2)$, unlike $V'$ which is.

Given that $\rho \in H$ but is not at its barycenter, and given that $H$ has
$d$ vertices, we have%
\begin{equation}
S(\rho )<\log d.  \label{eq:LHS2}
\end{equation}%
From Eqs.~(\ref{eq:RHS2}) and (\ref{eq:LHS2}) we infer the failure of
concavity of the mixing entropy.

We finish the proof by establishing the claim that $L$, defined above, is a line segment.
First note that
because it is an intersection of convex compact sets, it is compact and convex.
Because $\dim(\aff(T))=2$ and $\dim(\aff(H)) = d-1$,
$\aff(T) \intersect \aff(H)$ is one or two dimensional.  For it to be
two dimensional would require $T \subset \aff(H)$, implying $\rho_1,
\rho_2 \in \aff(H)$, and hence since $F_1 \intersect F_2 \subset H$,
that $V_1, V_2$ lie in the hyperplane $\aff(H)$.  That contradicts the
assumption that $F_1$, $F_2$ are distinct maximal faces.

Since $L \subset \aff(T) \intersect \aff(H)$, $L$ is at most one-dimensional.
To show it is \emph{at least }1-dimensional
we begin by observing that because they are subsets of $S$,
both $T$ and $H$ lie in the
``tangent wedge'' $W$ to $S$ at $\rho_3$, i.e. the intersection of the
halfspaces $\aff(F_1)_+$ and $\aff(F_2)_+$.
 Here $\aff(F_i)_+$ is defined
to be the closed half-space to the polytope $S$'s side of $\aff(F_i)$.
In fact, $V$ lies in the
{\em interior} of $W$ because if it lay in $\aff(F_1)$ or $\aff(F_2)$,
our assumption that all maximal faces were simplices would be violated.
Viewing $\rho_3$ as the origin of a real linear space, and noting that
$\lin T$ and $\lin (F_1 \intersect F_2)$ are complementary subspaces (they
span the space and intersect only at $0$, i.e. $\rho_3$) we can decompose
$V$ in a unique way into a component in $\lin(T)$ and a component in the
edge, $\lin(F_1 \intersect F_2)$, of the tangent wedge.

Let $q$ be the linear projection with kernel $\lin(F_1 \intersect F_2)$ and
image $\lin(T)$.  For any set $X$ such that $X = X + \lin(F_1 \intersect F_2)$,
$q(X) = \lin(T) \intersect X$.  Both $W$ and $\aff(H)$ satisfy this condition.
As already noted, $V \in \interior W$; this
is equivalent to $q(V)$ being in the relative interior of $\cone(T)$.   Since
$V \in H$, $q(V)$ is also in $\aff(H)$.
Therefore $\cone(T) \intersect \aff(H)$ is an interior ray $r$ of
$\cone(T)$.
  Now, since $\rho_3$ is the barycenter of $F_1 \intersect F_2$,
there is a $d-2$-dimensional
$\aff(F_1 \intersect F_2)$-ball $B_0$ around $\rho_3$ contained entirely
in $F_1 \intersect F_2$.  Furthermore, $B_0$ is the intersection of a
$d-1$-dimensional $\aff(H)$-ball $B$ around $\rho_3$,
with $F_1 \intersect F_2$.  $\aff(F_1 \intersect F_2)$
divides $\aff(H)$ into halfspaces, and the $H$-side half-ball $B \backslash
B_0$ is contained entirely in the relative interior of $H$.  Since, as established
near the beginning of our argument,
$\aff(T) \intersect \aff(H)$ is a line in $\aff(H)$, and we now know that while
it contains $\rho_3$ it is not
entirely contained in $F_1 \intersect F_2$,
it must intersect $B \backslash B_0$.
Its intersection with $B \backslash B_0$ is contained in $H$.
By choosing $B$ small enough, we can ensure that this intersection
is also contained in $T$.  This is obvious from two-dimensional geometry.
To be slightly more explicit, the facts that $r$, i.e. the half of
$\aff(H) \intersect \aff(T)$ on the $H$-side of
$\aff(F_1 \intersect F_2)$, is interior to $\cone(T)$, $\cone(T)$ is generated by $T$,
and $T$ is closed under multiplication by scalars in $[0,1]$, ensure this.
Since $\aff(T) \intersect \aff(H) \intersect B$ is contained in
both $H$ and $T$, it is contained in $L$; since it is one-dimensional,
so is $L$ and so, since $L$ is a compact convex
set, $L$ is a line segment.
\end{proof}

In generalized theories we can define (cf. also \cite{BKOV03},
where analogous quantities for convex-sets-based theories were
defined, and their failure to be concave in general was also observed)
measurement-entropy-like
quantities $H_T$ based on {\em any} function $T$ that
(like entropy) is Schur-concave
and defined on finite lists of classical probabilities.  For a state
$\rho$, $H_T(\rho)$ is defined as the infimum
over tests of the value of $T$ on
the probabilities for the results of the test.
We define $U_{d}$ for positive integers $d$ as the uniform distribution on
$d$ alternatives.  The same proof as before (with $F(U_d)$ in place of $\log{d}$) gives us:
\begin{proposition}
For any $T$ whose value on $U_{d+1}$ is strictly greater than its value
on $U_{d}$, for all $d$, for example any strictly Schur-concave $T$,
the only polytopes on which $H_T$ is concave are simplices.
\end{proposition}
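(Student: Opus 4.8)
The plan is to rerun the proof of Theorem~\ref{mixnonconcavetheorem} and of the Lemma of this appendix essentially verbatim, with the quantity $\log d$ --- there the entropy of the barycenter of a $d$-vertex simplex --- replaced throughout by $T(U_d)$. The only properties of $d\mapsto\log d$ that the old proof actually uses are: (a) strict monotonicity, $\log(d-1)<\log d$; and (b) that among probability vectors supported on at most $d$ alternatives, $U_d$ is the unique maximiser of the entropy, with value $\log d$. Property (a) is precisely the standing hypothesis $T(U_{d-1})<T(U_d)$; property (b) holds for any strictly Schur-concave $T$, which is why that class is singled out, and the parenthetical ``for example any strictly Schur-concave $T$'' is justified because $U_{d+1}\prec U_d$ is a strict majorisation and hence forces $T(U_{d+1})>T(U_d)$. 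One of the two cases below uses only (a); the other uses (b) as well.

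Everything geometric transfers without change, since it refers only to the polytope $S$: the induction on dimension and the reduction, in the proof of Theorem~\ref{mixnonconcavetheorem}, to a non-simplicial $S$ all of whose maximal faces are $(d-1)$-simplices (base case $d=2$ included); and, for such an $S$, the construction of maximal simplicial faces $F_1,F_2$ with $F_1\cap F_2$ a $(d-2)$-simplex, the barycenters $\rho_1,\rho_2$ of $F_1,F_2$ and $\rho_3$ of $F_1\cap F_2$, a vertex $V\notin F_1\cup F_2$, the simplex $H=\conv(F_1\cap F_2,V)$, the triangle $\conv(\rho_1,\rho_2,\rho_3)$, the segment $L$ equal to its intersection with $H$ (including the technical proof that $L$ really is a segment), and the state $\rho$ defined as the endpoint of $L$ other than $\rho_3$. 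What must be redone are the entropy values. Since $F_1,F_2,F_1\cap F_2$ are \emph{faces} of $S$, the only preparations of their barycenters involve pure states of those faces alone, so the barycentric weight vectors are $U_d,U_d,U_{d-1}$; hence $H_T(\rho_1)=H_T(\rho_2)=T(U_d)$ and $H_T(\rho_3)=T(U_{d-1})$, the analogues of $S(\rho_i)=\log d$ and $S(\rho_3)=\log(d-1)$. Moreover any state lying in a $k$-vertex simplicial subpolytope of $S$ has $H_T\le T(U_k)$ (by Schur-concavity, since $U_k$ is majorised by every $k$-point vector), with strict inequality unless that state is the subpolytope's barycenter --- strict either by (a), when the weight vector is supported on fewer than $k$ points, or by (b), when it is supported on $k$ points but is not $U_k$.

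With these facts the two cases of the Lemma close exactly as before. If $\rho$ lies on a facet of $H$, which has $d-1$ vertices, then $H_T(\rho)\le T(U_{d-1})$, whereas $\rho=p_1\rho_1+p_2\rho_2+p_3\rho_3$ with $p_1+p_2>0$ gives $\sum_i p_i H_T(\rho_i)=(p_1+p_2)T(U_d)+p_3 T(U_{d-1})>T(U_{d-1})$ by (a), so concavity fails. If instead $\rho$ is in the relative interior of $H$, then $\rho=p_1\rho_1+p_2\rho_2$, so $\sum_i p_i H_T(\rho_i)=T(U_d)$, while $\rho$ is shown in the Lemma to be the barycenter of a $d$-vertex simplex $H'\neq H$, hence not the barycenter of $H$, so $H_T(\rho)<T(U_d)$ by (b); again concavity fails. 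The step I expect to need the most care is the lower bound $H_T(\rho_i)\ge T(U_d)$, the matching upper bound being trivial: for a preparation-type reading of $H_T$ this is immediate from the definition of a face, but for a measurement-type reading one must verify that no admissible measurement makes the outcome distribution on the barycenter of an exposed simplicial face peakier than uniform, and it is exposedness of the face that does the work there. The only other non-bookkeeping point is the appeal to the strict form of (b) in the relative-interior case, which is precisely why strict Schur-concavity, rather than just $T(U_{d-1})<T(U_d)$, is invoked in the hypothesis.
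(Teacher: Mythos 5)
Your proposal is correct and takes essentially the same route as the paper, whose entire argument for this Proposition is the one-line remark that the proof of Theorem~\ref{mixnonconcavetheorem} goes through with $T(U_d)$ substituted for $\log d$ --- precisely the substitution you carry out. Your explicit bookkeeping of where the monotonicity hypothesis $T(U_{d-1})<T(U_d)$ suffices (the boundary case of the Lemma) versus where strict Schur-concavity is genuinely needed (the relative-interior case, where the weight vector can have full support yet differ from $U_d$) is in fact more careful than the paper's own justification.
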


\section{Entropy and quantum axiomatics}\label{monoentropic}

That mixing and measurement entropies coincide, as they do in
classical and quantum theory, has powerful consequences for the
structure of a probablistic model and, perhaps even more profoundly,
for the structure of a probabilistic {\em theory}. As already noted,
it implies that mixing entropy is concave, which places sharp
restrictions on the geometry of state spaces. It also figures
importantly in our derivation, in Section V, of information
causality. In this Appendix, we explore some further consequences of
monoentropicity, and also suggest some other postulates, the physical
content of which may be clearer, that enforce this property.

It will be helpful to impose some mild restrictions on the models we
consider. (These are satisfied by all of the examples discussed earlier.) First,
we want to have enough analytic structure to guarantee that
measurement entropies will be well-behaved. Accordingly, in this
Appendix we shall require of all models $A = ({\mathfrak A},\Omega)$ that
$\Omega$ be a compact, finite-dimensional convex set, as already
assumed in Section~\ref{testspaces}; additionally, we make the following
technical,
but reasonable and fairly weak, assumptions:
\begin{itemize}
\item[(i)] The total outcome-set $X$ is compact in some Hausdorff topology that makes
every state $\alpha \in \Omega^A$ continuous as a function $\alpha : X \rightarrow [0,1]$.
\item[(ii)] Write $x \perp y$ to mean that outcomes $x$ and $y$ are
 distinct and jointly testable, i.e., there exists a test $E \in
{\mathfrak A}$ containing them both.  We require that $\perp$ be closed
 as a subset of $X \times X$.
 \item[(iii)] ${\mathfrak A}$ is compact in the standard topology it inherits from $X$ (as explained below).
\end{itemize}
 Conditions (i) and (ii) have a certain {\em a priori}
plausibility, and, indeed, are often satisfied in practice: see
\cite{Wilce09} for examples of large classes of test spaces satisfying
them. Condition (iii) requires some further justification. Conditions
(i) and (ii) make $\mathfrak A$ a {\em topological test space}
\cite{Wilce05, Wilce09}.  With $X$ compact, as in condition (i), $\mathfrak
A$ has finite rank \cite{Wilce09}, Lemma 204. This allows us to
topologize the set $\mathfrak A$ of tests as a quotient of a suitable
subspace of $X^{n}$, where $n$ is the rank of $\mathfrak A$. We call this
the {\em standard topology} on $\mathfrak A$. One can show (\cite{Wilce09},
Prop. 211) that $\mathfrak A$ can be enlarged so as to become compact in
this topology, without change to its rank or to its space of
continuous states.  So condition (iii) is relatively harmless. In
fact, if $\mathfrak A$ is {\em uniform}, meaning that all tests have the
same number of outcomes, condition (iii) is automatically satisfied,
given (i) and (ii).

It is not difficult to show that $H_{E}(\alpha)$ is continuous as a
function of $E \in {\mathfrak A}$ (see \cite{Wilce09}, Lemma 210). Hence,
for every state $\alpha \in \Omega$, there exists a test $E$ for which
$H_{E}(\alpha) = H(\alpha)$.  This justifies the assumption to this
effect made in Section~\ref{measurementandmixing}.

We can now characterize those states having zero measurement or mixing
entropy.

\begin{lemma}\label{zeroentropylemma}
Let $\alpha$ be a state of a system $A = ({\mathfrak A},\Omega)$ satisfying the standing assumptions just discussed.
\begin{itemize}
\item[(a)] $H(\alpha) = 0$ iff there exists an outcome $x \in X$ with $\alpha(x) = 1$.
\item[(b)] If $S(\alpha) = 0$, then $\alpha$ is the limit of a sequence of pure states of $\Omega$.
\end{itemize}
\end{lemma}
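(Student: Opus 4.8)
The plan is to handle the two parts separately, relying only on elementary properties of the Shannon entropy together with, for part (a), the attainment of the measurement-entropy infimum guaranteed by conditions (i)--(iii) above. For part (a), the implication ``$\Leftarrow$'' is immediate: if $\alpha(x)=1$ for some $x \in X = \bigcup {\mathfrak A}$, pick a test $E \in {\mathfrak A}$ with $x \in E$; then $\sum_{y\in E}\alpha(y)=1$ together with $\alpha(x)=1$ forces $\alpha|_E$ to be a point mass, so $H_E(\alpha)=0$ and hence $H(\alpha)=\inf_E H_E(\alpha)=0$ by non-negativity of local entropy. For ``$\Rightarrow$'', recall that under conditions (i)--(iii) the function $E\mapsto H_E(\alpha)$ is continuous on the compact test space ${\mathfrak A}$, so the infimum defining $H(\alpha)$ is attained at some $E$; if $H(\alpha)=0$ then $H_E(\alpha)=0$, and since the Shannon entropy of a probability vector vanishes exactly when the vector is a point mass, some $x\in E\subseteq X$ has $\alpha(x)=1$.

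For part (b), use $S(\alpha)=0$ to pick, for each $n$, a finite convex decomposition into pure states $\alpha=\sum_i p^{(n)}_i\alpha^{(n)}_i$ with $H(p^{(n)})<1/n$. The one nontrivial ingredient is the estimate $H(p)=\sum_i p_i\log(1/p_i)\ge \log(1/\max_i p_i)$, which holds because $\log(1/p_i)\ge\log(1/\max_j p_j)$ for every $i$; writing $q_n:=\max_i p^{(n)}_i$ and $i_n$ for an index attaining it, this gives $q_n\ge 2^{-H(p^{(n)})}\to 1$. If $q_n=1$ for some $n$ then $\alpha$ is itself pure and the constant sequence works; otherwise set $\beta_n:=(1-q_n)^{-1}\sum_{i\neq i_n}p^{(n)}_i\alpha^{(n)}_i$, which lies in $\Omega$ by convexity, so that $\alpha=q_n\alpha^{(n)}_{i_n}+(1-q_n)\beta_n$. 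Since $\Omega$ is compact in a finite-dimensional space it has finite diameter $D$, and in any norm on its affine span
\[
\|\alpha-\alpha^{(n)}_{i_n}\|=(1-q_n)\,\|\beta_n-\alpha^{(n)}_{i_n}\|\le (1-q_n)D\longrightarrow 0,
\]
so the pure states $\alpha^{(n)}_{i_n}$ converge to $\alpha$, as required.

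I do not expect a genuine obstacle: the only step in part (a) needing more than routine manipulation is the appeal to attainment of the infimum, which is exactly what the continuity/compactness discussion preceding the lemma provides, while part (b) is just the entropy-to-maximal-probability bound together with convexity of $\Omega$. It is worth noting that the converse of (b) fails --- in Example~\ref{noncontinuousexample} the distinguished state $\alpha$ is a limit of pure states yet has $S(\alpha)=1$, since no pure states approaching $\alpha$ can be assembled into a convex decomposition of $\alpha$ --- which is why (b) is stated only as a one-way implication.
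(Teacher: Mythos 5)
Your proposal is correct and follows essentially the same route as the paper: part (a) via attainment of the infimum on a test, and part (b) via the bound $\max_i p_i \geq 2^{-H(\vec p)}$ applied to near-optimal pure-state ensembles, extracting the dominant pure component as the approximating sequence. The only cosmetic difference is that you bound $\|\alpha-\alpha^{(n)}_{i_n}\|$ by the diameter of the compact state space, whereas the paper evaluates $\alpha - p_{1,k}\alpha_{1,k}$ on the unit functional; both yield the same conclusion.
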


\begin{proof}
(a) ``if'' follows immediately from the definition, with $E$ any test containing $x$;
``only if'' from the fact (established just above the statement of the Lemma) that $H(\alpha) = H_{E}(\alpha)$ for some $E \in {\mathfrak A}$. For (b), note that if $\vec{p} = (p_1,...,p_n)$ is a discrete probability distribution with $H(\vec{p}) < \epsilon$, then
$\max \{p_i\} > 2^{-\epsilon}$. Now if $S(\alpha) = 0$, we can find, for any sequence $\epsilon_k$ decreasing to $0$, a sequence
of pure-state ensembles $\{p_{i,k} \alpha_{i,k} | i = 1,....,n_k\}$ for $\alpha$ (so that $\alpha = \sum_{i=1}^{n_k} p_{i,k} \epsilon_{i,k}$ for every $k$) with  $H(\vec{p_{k}}) < \epsilon_k$. Ordering each ensemble so that $p_{1,k} = \max\{p_{i,k} | i = 1,...,n_{k}\}$, we find, as above, that
$p_{1,k} > 2^{-\epsilon_{k}}$. Since $\alpha = p_{1,k} \alpha_{1,k} + \sum_{i=2}^{n} p_{i,k} \epsilon_{i,k}$, we have $\alpha > p_{1,k}\alpha_{1,k}$
in the pointwise order on $X = \bigcup {\mathfrak A}$;
consequently, $\|\alpha - p_{1,k}\alpha_{1,k}\| = (\alpha - p_{1,k} \alpha_{1,k})(u) = 1 - p_{1,k}$. Thus, $\alpha = \lim_{k} \alpha_{1,k}$.
\end{proof}

The converse to part (b) would trivially be true if the mixing entropy were continuous on the convex set $\Omega$. However, as Example~\ref{noncontinuousexample} of the main text shows, it need not be.



Call a model $A = (\A,\Omega)$ {\em unital} iff for every  outcome  $x \in X :=
\bigcup {\mathfrak A}$, there exists at least one state $\alpha$ with
$\alpha(x) = 1$. If this state is unique---and therefore pure---for
every $x$, we say that $A$ is {\em sharp}. In this case, we write
$\epsilon_x$ for the unique state with $\epsilon_x(x) = 1$.  In the
literature of quantum axiomatics, sharpness has sometimes been taken
as an axiom (sometimes called {\em Gunson's Axiom})
\cite{Gunson}. Lemma~\ref{zeroentropylemma} has the following corollary:
\begin{corollary}
Suppose $A$ is monoentropic, and that the set of pure states in $A$ is closed. Then $A$ is sharp.
\end{corollary}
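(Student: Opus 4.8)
The plan is to establish, for an arbitrary outcome $x \in X$, two facts: (a) every state $\alpha \in \Omega$ with $\alpha(x) = 1$ is pure, and (b) there is at most one such state. Granting unitality (which is what supplies the existence of at least one state with $\alpha(x) = 1$ for each $x$, and which I take to be in force here), these two facts together are exactly the definition of sharpness.

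For (a), I would argue as follows. If $\alpha(x) = 1$, then for any test $E \in {\mathfrak A}$ with $x \in E$ the remaining outcomes of $E$ all receive weight $0$, so $H_E(\alpha) = 0$ and hence $H(\alpha) = 0$; this is just the ``if'' direction of Lemma~\ref{zeroentropylemma}(a). Because $A$ is monoentropic, $S(\alpha) = H(\alpha) = 0$, and then Lemma~\ref{zeroentropylemma}(b) shows that $\alpha$ is the limit of a sequence of pure states of $\Omega$. Since the set of pure states is closed by hypothesis, $\alpha$ is itself pure.

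For (b), suppose $\alpha, \beta \in \Omega$ both assign probability $1$ to $x$. Their midpoint $\gamma := \frac{1}{2}\alpha + \frac{1}{2}\beta$ lies in $\Omega$ by convexity and satisfies $\gamma(x) = 1$, so by (a) it is pure, i.e.\ an extreme point of $\Omega$. But an extreme point cannot be written as a nontrivial convex combination of two members of $\Omega$ unless those members coincide; hence $\alpha = \beta$. Combining (a) and (b): for each outcome $x$ there is a unique state assigning it probability $1$, and that state is pure --- precisely the assertion that $A$ is sharp.

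I do not expect a genuine obstacle here: all of the analytic work is already carried by Lemma~\ref{zeroentropylemma}, and the remainder reduces to the elementary fact that extreme points admit no nontrivial convex decomposition. The only points needing care are bookkeeping ones --- that $A$ is assumed to satisfy the Appendix's standing topological conditions (i)--(iii), under which Lemma~\ref{zeroentropylemma} holds, and that ``sharp'' presupposes unitality, so the existence half of the conclusion is inherited rather than proved.
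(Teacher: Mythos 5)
Your proof is correct and follows essentially the same route as the paper's: $\alpha(x)=1$ forces $H(\alpha)=0$, monoentropicity gives $S(\alpha)=0$, Lemma~\ref{zeroentropylemma}(b) plus closedness of the pure states gives purity, and uniqueness follows from the convexity of the set of states assigning unit probability to $x$. Your explicit remark that unitality is presupposed rather than proved is a fair reading of the paper, which leaves that point implicit.
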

\begin{proof}
If $\alpha$ is a pure state, then $H(\alpha) = S(\alpha) = 0$. By
Lemma~\ref{zeroentropylemma}, there exists a measurement outcome $x$
with $\alpha(x) = 1$. On the other hand, if $\alpha(x) = 1$, then
$S(\alpha) = H(\alpha) = 0$, whence, again by
Lemma~\ref{zeroentropylemma}, $\alpha$ is the limit of a sequence of
pure states, say $\epsilon_n \rightarrow \alpha$. By assumption, the
set of pure states is closed, so $\alpha$ is pure. Since the set of
states assigning unit probability to $x$ is convex, it follows that
$\alpha$ is the unique such state.
\end{proof}

While the condition that the set of pure states be closed is not totally innocent (consider, e.g., Example~\ref{noncontinuousexample} above), neither is it unreasonable. For example, it will be satisfied if there exists a compact group of symmetries of the state space that acts transitively on the pure states.

The condition that measurement and mixing entropies coincide also places some constraints on how systems compose:
\begin{lemma}
Suppose that $AB = ({\mathfrak C},\Omega^{AB})$ is a composite (in the
sense of Section~\ref{testspaces}) of systems $A = (\A,\Omega^A)$ and
$B = (\B,\Omega^B)$. Suppose that $A, B$ an $AB$ have closed sets of
pure states, and are monoentropic. If $\Omega^{AB}$ contains an
entangled pure state, then ${\mathfrak C}$ must contain a non-product
outcome.
\end{lemma}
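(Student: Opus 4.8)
The plan is to prove the contrapositive: assuming that \emph{every} outcome of ${\mathfrak C}$ is a product outcome, so that the outcome set of ${\mathfrak C}$ lies inside $X^A \times X^B$, I will show that every pure state of $AB$ is then a product state, whence $\Omega^{AB}$ contains no entangled pure state.

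First I would record that, by the corollary to Lemma~\ref{zeroentropylemma}, the standing hypotheses (monoentropic, closed sets of pure states) make each of $A$, $B$, $AB$ \emph{sharp} (it is really only sharpness of $B$ that the argument will invoke). Now let $\omega \in \Omega^{AB}$ be pure. Since $\omega$ is pure, $S(\omega) = 0$, and monoentropicity of $AB$ gives $H(\omega) = 0$; hence by Lemma~\ref{zeroentropylemma}(a) (equivalently, by the remark in Section~\ref{measurementandmixing} that $H = 0$ iff the state assigns probability $1$ to an outcome of some test) there is an outcome of ${\mathfrak C}$ of $\omega$-probability $1$, which by our standing assumption is a product outcome, say $(x,y)$; thus $\omega(x,y) = 1$.

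Next I would extract the marginals and conditionals. Evaluating $\omega$ on a product test $E\times F$ with $x \in E$ and $y \in F$: the nonnegative numbers $\omega(e,f)$ sum to $1$ over $E\times F$ and $\omega(x,y) = 1$, so all other entries vanish; hence $\omega^A(x) = \omega^B(y) = 1$. A short no-signaling argument then gives $\omega(e,y) = \omega^A(e)$ for every $e \in X^A$: for $f \neq y$ in a test of $B$ containing $y$ we have $\omega^B(f) = 0$, which forces $\omega(e,f) = 0$ (as a term in a nonnegative sum over a test of $A$ containing $e$ that equals $\omega^B(f)$), so $\omega^A(e) = \sum_{f'} \omega(e,f') = \omega(e,y)$. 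Now fix any product outcome $(e,f)$. If $\omega^A(e) = 0$, then (again by a nonnegative sum, over a test of $B$) $\omega(e,f) = 0 = \omega^A(e)\,\omega^B(f)$. If $\omega^A(e) \neq 0$, the composite-system hypothesis puts $\omega^{B|e}$ in $\Omega^B$, and $\omega^{B|e}(y) = \omega(e,y)/\omega^A(e) = 1 = \omega^B(y)$; since $B$ is sharp, $\omega^{B|e} = \omega^B$, so $\omega(e,f) = \omega^A(e)\,\omega^{B|e}(f) = \omega^A(e)\,\omega^B(f)$. As every outcome of ${\mathfrak C}$ is of this form, $\omega = \omega^A \otimes \omega^B$, a product state. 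This contradicts the assumed existence of an entangled pure state, proving the lemma.

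I expect the only substantive point to be the step ``$\omega$ pure $\Rightarrow$ $\omega$, and each of its nonzero conditionals, assigns probability $1$ to an outcome'': this is exactly where monoentropicity and sharpness do the work, through Lemma~\ref{zeroentropylemma} and its corollary; everything after that is routine bookkeeping with non-signaling marginals and conditional states. One small thing to check along the way is that the composite $({\mathfrak C},\Omega^{AB})$ satisfies (or may be assumed to satisfy) the mild analytic standing assumptions of this appendix that Lemma~\ref{zeroentropylemma} relies on; if one prefers to avoid this, one can invoke only the elementary fact from Section~\ref{measurementandmixing} that a vanishing measurement entropy is witnessed by an outcome of probability $1$, which needs no such hypotheses.
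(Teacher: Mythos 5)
Your proof is correct and follows essentially the same route as the paper's: monoentropicity plus purity forces zero measurement entropy, hence certainty of some outcome, and sharpness then pins the state down so that a product outcome yields a product state. The only difference is that you explicitly verify (via the no-signaling/conditional-state bookkeeping) the step the paper merely asserts, namely that a state assigning probability $1$ to a product outcome $xy$ must equal $\epsilon_x \otimes \epsilon_y$.
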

\begin{proof}
By the previous Lemma, $A$, $B$ and $AB$ are sharp. If $x \in \bigcup
{\mathfrak A}$ and $y \in \bigcup {\mathfrak B}$ are outcomes of $A$ and $B$,
respectively, and $\epsilon_x$, $\epsilon_y$ and $\epsilon_{xy}$ are
the unique pure states making $x$, $y$ and $xy$ certain, then
$\epsilon_{xy} = \epsilon_{x} \otimes \epsilon_y$. Now if $\rho$ is a
pure entangled state in $\Omega^{AB}$, then $S(\rho) = 0$. If $H = S$,
then $H(\rho) = 0$, whence, $\rho = \epsilon_{z}$ for some outcome $z
\in \bigcup{\mathfrak C}$. If $z$ is a product outcome, say $z = xy$, then
$\rho = \epsilon_{x} \otimes \epsilon_{y}$ -- a contradiction.
\end{proof}

We now consider whether the condition that $H = S$ can be derived from
more physically transparent considerations.
\begin{definition}
 A probabilistic theory has the {\em pure conditioning property}
 iff, for every pair of systems $A = (\A,\Omega^{A})$ and $B =
 (\B,\Omega^{B})$, every pure state $\omega$ of $AB$, and all
 outcomes $x$ of $A$ and $y$ of $B$, the conditional states $\omega^{B|x}$ and
 $\omega^{A|y}$ are pure. \end{definition}

\begin{lemma}
If a theory satisfies the pure conditioning property, then for any pure bipartite state $\omega$
on a composite $AB$, we have $S(\omega^{B}) \leq H(\omega^{A})$ and
$S(\omega^{A}) \leq H(\omega^{B})$.\end{lemma}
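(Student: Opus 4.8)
The plan is to produce, directly from the structure of the composite, an explicit finite convex decomposition of the marginal $\omega^B$ into pure states whose weight list is exactly the outcome distribution of an entropy-minimizing test on $A$, and then to read off the inequality from the definition of mixing entropy. Symmetry then yields the companion bound.

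First I would invoke the standing assumption that measurement entropy is attained on some test (together with local finiteness of ${\mathfrak A}$) to pick a finite test $E \in {\mathfrak A}$ with $H(\omega^A) = H_E(\omega^A) = -\sum_{x \in E}\omega^A(x)\log\omega^A(x)$. For each $x \in E$ with $\omega^A(x) \neq 0$, the conditional state $\omega^{B|x}$ lies in $\Omega^B$ by the composite assumptions of Section~\ref{testspaces}; terms with $\omega^A(x) = 0$ are dropped by the usual convention. The elementary identity I would use is
\[
\omega^B = \sum_{x \in E} \omega^A(x)\,\omega^{B|x},
\]
which holds pointwise: for any outcome $f \in \bigcup {\mathfrak B}$, evaluating the marginal through a product test gives $\omega^B(f) = \sum_{x\in E}\omega^{AB}(x,f) = \sum_{x\in E}\omega^A(x)\,\omega^{B|x}(f)$ straight from the definition of the conditional state, and $\sum_{x\in E}\omega^A(x) = 1$ because $E$ is a test, so this is a genuine convex combination.

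Next I would apply the pure conditioning property: since $\omega$ is a \emph{pure} state of $AB$, each $\omega^{B|x}$ is pure. Hence the displayed identity is a finite convex decomposition of $\omega^B$ into pure states of $B$, with weight list $(\omega^A(x))_{x\in E}$. By the definition of mixing entropy as an infimum of Shannon entropies over all such decompositions,
\[
S(\omega^B) \;\leq\; -\sum_{x\in E}\omega^A(x)\log\omega^A(x) \;=\; H_E(\omega^A) \;=\; H(\omega^A).
\]
The second inequality $S(\omega^A) \leq H(\omega^B)$ follows by interchanging the roles of $A$ and $B$, using that both the pure conditioning property and the marginal/conditional construction are symmetric in the two factors.

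I do not expect a real obstacle here; the proof is essentially a one-line unpacking of the definitions. The only points that warrant a little care are the bookkeeping around zero-probability outcomes of $E$ (harmless, since deleting zero-weight terms or merging any coincidentally equal conditionals can only lower the Shannon entropy, and in any case Definition~2 permits arbitrary finite pure-state decompositions), and the verification that each $\omega^{B|x}$ is a legitimate pure state of $B$, which is exactly what the composite assumptions plus the pure conditioning property supply.
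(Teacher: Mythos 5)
Your proof is correct and is essentially identical to the paper's: both pick a test $E$ attaining $H(\omega^A)$, decompose $\omega^B = \sum_{x\in E}\omega^A(x)\,\omega^{B|x}$, invoke pure conditioning to see that each $\omega^{B|x}$ is pure, and read off $S(\omega^B)\leq H_E(\omega^A)$ from the definition of mixing entropy, with the companion bound by symmetry. Your extra care about zero-probability outcomes and the legitimacy of the conditional states is a small tidying-up of details the paper leaves implicit.
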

\begin{proof}
Let $\omega$ be a pure bipartite state. Pick an observable $E$
minimizing measurement entropy for $\omega_1$, so that $H(\omega_1)$
is the Shannon entropy $H_{E}(\omega^{A}) := -\sum_{x \in E}
\omega^{A}(x) \log(\omega^{A}(x))$.  We have $\omega^{B} = \sum_{x \in
  E} \omega^{A}(x) \omega^{B|x}$. By PC (and the assumption that
$\omega$ is pure), the conditional states $\omega^{B|x}$ are pure. By
definition, $S(\omega^{B})$ is the minimum Shannon entropy of the
mixing coefficients in any pure-state ensemble for $\omega^{B}$, so
$S(\omega^{B}) \leq H_{E}(\omega^{A}) = H(\omega^{A})$. By the
same argument, $S(\omega^{A}) \leq H(\omega^{B})$.  \end{proof}

\begin{definition}
A theory has the {\em steering property} iff, for every pair of systems $A$ and $B$,
every pure bipartite state $\omega$ of $AB$ {\em steers} its
marginals, in the sense that for any convex decomposition $\omega^{B}
= \sum_{i} p_i \beta_i$, with $\beta_i$ pure and distinct from each other,
there is a test
$E = \{a_i\}$ of $A$ with $\beta_i = \omega^{B|a_i}$, and
similarly for $\omega^{A}$.
\end{definition}

The term ``steering'' is due to Schr\"odinger \cite{Schrodinger}, who
showed that quantum theory
is steering; further proofs and extensions  are in  Hadjisavvas \cite{Hadjisavvas}
and Hughston, Jozsa, and Wootters \cite{Hughston}; a survey is \cite{Kirkpatrick}.

\begin{lemma} If a theory has the steering property, then for every pure bipartite state $\omega$, $H(\omega^{A}) \leq S(\omega^{B})$.\end{lemma}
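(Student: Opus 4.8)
The plan is to use the steering property to make Alice's measurement on $A$ ``read out'' a near-optimal decomposition of $\omega^B$ into pure states: the Shannon entropy of that measurement's outcome statistics will be at most $S(\omega^B)+\epsilon$, and it is of course at least $H(\omega^A)$.

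Fix $\epsilon>0$. By definition of the mixing entropy there is a finite pure-state decomposition $\omega^B=\sum_{i=1}^n p_i\beta_i$ with $H(p_1,\dots,p_n)\le S(\omega^B)+\epsilon$. Before invoking steering I would first trim this ensemble so that the pure states it uses are linearly independent. Let $P$ be the (nonempty, compact) polytope of all weight vectors $r=(r_1,\dots,r_n)$ with $r_i\ge 0$, $\sum_i r_i=1$, and $\sum_i r_i\beta_i=\omega^B$; it contains $(p_i)$. Since the Shannon entropy is concave on $P$ it attains its minimum over $P$ at an extreme point $r^*$, and at an extreme point the pure states indexed by $\{i:r^*_i>0\}$ are necessarily affinely independent --- otherwise a small perturbation of $r^*$ along an affine dependence would stay in $P$, contradicting that $r^*$ is extreme. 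Affine independence of states implies linear independence: apply the functional ``sum over the outcomes of a test of $B$'' to a linear relation among the $\beta_i$ to see that the coefficients sum to zero, then use affine independence. Relabelling, we have $\omega^B=\sum_{i=1}^m r^*_i\beta_i$ with each $r^*_i>0$, the $\beta_i$ pairwise distinct and linearly independent, and $H(r^*_1,\dots,r^*_m)\le H(p_1,\dots,p_n)\le S(\omega^B)+\epsilon$.

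Next I would apply the steering property to the pure bipartite state $\omega$ and this decomposition: there is a test $E=\{a_1,\dots,a_m\}$ of $A$ with $\omega^{B|a_i}=\beta_i$ for every $i$. Since $\omega^{B|a_i}$ equals the (normalized) pure state $\beta_i$, we must have $\omega^A(a_i)>0$ (the convention would otherwise give $\omega^{B|a_i}=0$), so $\omega^{AB}(a_i,f)=\omega^A(a_i)\,\beta_i(f)$ for every outcome $f$ of $B$; summing over the outcomes of the test $E$ and using the definition of the marginal yields $\omega^B=\sum_{i=1}^m\omega^A(a_i)\beta_i$. Comparing this with $\omega^B=\sum_i r^*_i\beta_i$ and using linear independence of the $\beta_i$ forces $\omega^A(a_i)=r^*_i$ for each $i$. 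Hence
\begin{align*}
H(\omega^A)&\le H_E(\omega^A)=-\sum_{i=1}^m\omega^A(a_i)\log\omega^A(a_i)\\
&=H(r^*_1,\dots,r^*_m)\le S(\omega^B)+\epsilon,
\end{align*}
and letting $\epsilon\to 0$ gives $H(\omega^A)\le S(\omega^B)$.

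The step I expect to be the crux is the identification $\omega^A(a_i)=r^*_i$: the steering property as stated matches only the conditional states, $\beta_i=\omega^{B|a_i}$, and not \emph{a priori} the probabilities, so the preliminary trimming of the ensemble to a linearly independent family of pure states is what actually forces the weights of Alice's readout to agree with those of the chosen decomposition of $\omega^B$. (If one instead reads ``steering'' as realizing the entire ensemble, weights included, the trimming step is unnecessary.) Everything else is routine bookkeeping with the definitions of marginals, conditionals, and measurement entropy, together with concavity of the Shannon entropy.
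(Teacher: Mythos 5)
Your proof is correct and follows the same overall strategy as the paper's: take an $\epsilon$-optimal pure-state decomposition of $\omega^B$, invoke steering to obtain a test $E$ on $A$ whose conditional states realize that decomposition, identify the outcome probabilities $\omega^A(a_i)$ with the ensemble weights, and conclude $H(\omega^A)\le H_E(\omega^A)\le S(\omega^B)+\epsilon$. The one place you go beyond the paper is the identification of weights: the paper simply writes ``whence $p_i=\omega^A(x_i)$,'' even though the steering property as stated only matches the conditional \emph{states}, not the probabilities. You correctly observe that this identification is not automatic when the pure states $\beta_i$ are linearly dependent, and you repair it by first passing to an extreme point of the polytope of weight vectors realizing $\omega^B$ from the given pure states, which yields an affinely (hence linearly) independent subfamily without increasing the Shannon entropy; linear independence then forces $\omega^A(a_i)=r^*_i$. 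This is a genuine tightening of the argument --- the paper's version implicitly assumes either that the chosen decomposition has linearly independent supports or that ``steering'' is read as realizing the ensemble weights as well as the states. Your version works under the weaker, literal reading of the definition, at the modest cost of the trimming lemma.
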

\begin{proof}
For any $\epsilon > 0$, choose a convex decomposition $\omega^{B} = \sum_{i} p_i \beta_i$ of $\omega^{B}$ into pure states $\beta_i$, with
$S(\omega^{B}) > H(p_i) - \epsilon$.
Since the state $\omega$ is steering, there exists a test $E = \{x_i\}$ with $\omega^{B|x_i} = \beta_i$, whence $p_i = \omega^A (x_i)$.
It follows that $S(\omega^{B}) > -\sum_i p_i \log(p_i) - \epsilon = H_{E}(\omega^{A}) - \epsilon$.
Since $\epsilon$ is arbitrary, $S(\omega^{B}) \geq H(\omega^{A})$.
\end{proof}

\begin{definition}
A pure state $\alpha$ in an abstract probabilistic
theory is {\em purifiable} iff for every state $\alpha$ on a system $A$, there
exists a pure bipartite state $\omega$ -- a {\em purification} of
$\alpha$ -- on a composite $AB$, with $B$ a copy of $A$, with
$\omega^{A} = \omega^{B} = \alpha$.
An abstract probabilistic theory has the {\em purifiability property} iff
every state in the theory is purifiable.
\end{definition}
Quantum mechanics has the purifiability property.  D'Ariano {\em et al.} \cite{D'Ariano09} have considered a
condition very similar to purifiability as a potential axiom for
quantum theory, and have shown that many other features of quantum
theory follow from it. From the Lemmas above, we have

\begin{proposition}  A theory that has the pure conditioning, steering
and purifiability properties is  monoentropic.  \end{proposition}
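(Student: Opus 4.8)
The plan is to prove the sharper statement that $H(\alpha)=S(\alpha)$ for \emph{every} system $A=(\A,\Omega^A)$ and \emph{every} state $\alpha\in\Omega^A$; since the choice of $A$ and $\alpha$ is arbitrary, this is exactly monoentropicity. The whole argument is a short chaining of the three Lemmas just established, the only real input being that purifiability lets us produce, for a given $\alpha$, a single pure bipartite state whose \emph{two} marginals are \emph{both} equal to $\alpha$.

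Concretely, I would proceed as follows. First, fix $\alpha\in\Omega^A$ and invoke the purifiability property to obtain a composite $AB$, with $B$ a copy of $A$, and a pure state $\omega$ on $AB$ with $\omega^{A}=\omega^{B}=\alpha$. Second, apply the steering Lemma to the pure bipartite state $\omega$: this gives $H(\omega^{A})\le S(\omega^{B})$, i.e.\ $H(\alpha)\le S(\alpha)$. Third, apply the pure-conditioning Lemma to the same $\omega$: this gives $S(\omega^{B})\le H(\omega^{A})$, i.e.\ $S(\alpha)\le H(\alpha)$. Combining the two inequalities yields $H(\alpha)=S(\alpha)$, as desired. (If one wishes, running the symmetric halves of the two Lemmas also gives $H(\omega^{B})=S(\omega^{A})$, but this is not needed once $\omega^{A}=\omega^{B}$.)

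There is essentially no obstacle here beyond a little bookkeeping. The one point deserving a word of care is purely a matter of reading the definitions correctly: the ``purifiability property'' must be understood as asserting that \emph{every} state of the theory (pure or not) admits a purification, so that it applies to the given $\alpha$; and one must note that a purification is by construction a \emph{pure} bipartite state, which is precisely the hypothesis needed to apply both the steering and the pure-conditioning Lemmas. Everything else is immediate.
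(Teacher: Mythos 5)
Your proof is correct and is exactly the argument the paper intends: the paper simply states ``From the Lemmas above, we have'' the proposition, and your chaining --- purify $\alpha$ to a pure $\omega$ with $\omega^A=\omega^B=\alpha$, then combine the steering inequality $H(\omega^A)\le S(\omega^B)$ with the pure-conditioning inequality $S(\omega^B)\le H(\omega^A)$ --- is the intended way to fill in that one-line proof.
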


\section{Linearized test space models, ordered linear space models, and entropy}
\label{appendix:linearization}

The apparatus of states on test spaces can be linearized, as
follows. If $A = ({\mathfrak A},\Omega)$, with total outcome space $X =
\bigcup {\mathfrak A}$, let $V(A)$ denote the span of $\Omega$ in ${\mathbb
  R}^{X}$, regarded as an ordered real vector space with positive cone
$V_{+}(A) = \{ \alpha \in V(A) | \alpha(x) \geq 0 \forall x \in
X\}$. Every outcome $x \in X$ defines a positive linear evaluation
functional $f_x \in V^{\ast}(\Omega)$ by $f_x(\mu) = \mu(x)$ for all
$\mu \in V(A)$. Moreover, one has $\sum_{x \in E} f_{x} = u$, where
$u$ is the unique functional taking the constant value $1$ on
$\Omega$. Abstracting, one defines an {\em effect} to be a positive
linear functional $f \in V^{\ast}(A)$ with $0 \leq f(\alpha) \leq 1$
for all $\alpha \in \Omega$ (equivalently, $0 \leq f \leq u$); an {\em
  observable} on $A$ is a sequence $f_1,...,f_n$ of effects with
$\sum_{i} f_i = u$.

From this point of view, the structure of the test space is
essentially a privileged set of observables -- an additional structure
that (like a preferred basis for a vector space) may or may carry some
useful information, or may simply be a computational convenience. For
example, if $A(\H) = ({\mathfrak F}(\H),\Omega(\H))$ is a quantum
system, $V(A)$ is the space of quadratic forms associated with -- but
one might as well say, the space {\em of} -- Hermitian operators on
$\H$, and $V^{\ast}$ is essentially the same space, under the duality
$a(\rho) = \Tr(\rho a)$.  In particular, an effect is a positive
operator between $0$ and $\mathbf{1}$, and an observable is
essentially a discrete POVM.  The convex sets, or ordered linear
spaces, formalism takes this kind of combination of a convex state space
and a set of effects in the dual cone to the state space, as
primary.  Roughly, a convex model is defined by taking a convex
compact set of states as a base for a cone $V(\Omega)_+$ of
unnormalized states, and a cone of ``unnormalized allowed effects''
that is a closed subcone $V^{\sharp}_+$, containing $u$ in its interior, of
the dual cone $V^*(\Omega)_+$ of all effects.  $u$ is defined by the
condition $u(\Omega) = 1$, and the interval $[0,u]$ according to the
ordering defined by $V^\sharp$ is the set of effects allowed in the
theory. When $V^\sharp = V(\Omega)_+$, the model is called {\em maximal}
(or sometimes \emph{saturated} \cite{BW09b}).  If the model is constructed from a
test space, one will usually want to choose $V^\sharp$ to contain the
effects associated with all outcomes in the test space.

Two natural distinguished classes of effects are the
\emph{ray-extremal} ones, that is effects that lie on extremal rays of
the cone generated by effects, and \emph{atomic} effects, i.e.,
maximal effects in extremal rays (equivalently, ray-extremal effects
that are also extremal in the convex set $[0, u]$ of effects).  We may
define the measurement entropy as the infimum of entropies obtainable
by measuring observables consisting of ray-extremal elements, or
alternatively as the infimum of entropies obtainable by measuring
observables consisting of atomic effects.  Intuitively, the
observables consisting of \emph{ray-extremal} effects are maximally
fine-grained.  Ray-extremal effects cannot be further refined by
decomposing them as sums of other effects.  Although they can be
decomposed as sums of shrunken versions of themselves, intuitively
this cannot provide any additional information about the system being
measured.  Certainly in the case of atomic effects, and probably with
some care and relabeling in the case of ray-extremal effects (which
unlike atomic effects may appear more than once in a given
observable), the measurements with such outcomes can be organized into
distinguished test spaces associated with a given convex-sets model,
so the test space framework we use in the main text will probably
cover this natural possibility, although the additional assumptions we
make to obtain particular results will need to be checked for these
cases.  In the case of ray-extremal effects, the infimum in the
definition of measurement entropy will likely not be changed if we
omit measurements in which an effect appears more than once; the
measurements without repetitions should be easier to organize into a
test space.  Linearization and the application of one of these
definitions may well remove pathologies in measurement entropy that
are associated with some test space/state space models.  The spirit of
the definition of measurement entropy via an infimum suggests
excluding tests that are not maximally fine grained when
viewed from the convex states perspective, as the above definitions
do.  Passing to these definitions may also remove pathologies
that might arise when the set of distinguished observables associated
with tests has an irregular relationship to a state space whose
underlying geometry is quite regular.


\begin{thebibliography}{References}
\bibitem{Barrett} J.~Barrett, Information processing in general probabilistic
theories, Phys. Rev. A. {\bf 75} 032304 (2007) (arXiv:quant-ph/0508211).
\bibitem{BBLW06} H.~Barnum, J.~Barrett, M.~Leifer and A.~Wilce, Cloning and
broadcasting in generic probabilistic models (2006) (arXiv:quant-ph/061129).
\bibitem{BBLW07} H.~Barnum, J.~Barrett, M.~Leifer and A.~Wilce, A
general no-cloning theorem, Phys. Rev. Lett. {\bf 99} 240501 (2007) (arXiv:0707.0620).
\bibitem{BBLW08} H.~Barnum, J.~Barrett, M.~Leifer and A.~Wilce,
  Teleportation in general probabilistic theories (2008) (arXiv:0805.3553).
\bibitem{BFRW05}  H.~Barnum, C.~Fuchs, J.~Renes and A.~Wilce, Influence-free
states on coupled quantum-mechanical systems (2005) (arXiv:quant-ph/0507108).
\bibitem{BGW} H.~Barnum, P.~Gaebbler and A.~Wilce, Weak self-duality
  and ensemble steering in general probabilistic theories, preprint.
\bibitem{BKOV03}H.~Barnum, E.~Knill, G.~Ortiz and L.~Viola,
  Generalizations of entanglement based on coherent states and convex
  sets, Phys. Rev. A {\bf 68} 032308 (2003).
\bibitem{BL} J.~Barrett and M.~Leifer, The de Finetti theorem for test spaces, New J. Phys. {\bf 11} (2009), 033024 (arXiv:0712.2265).
\bibitem{BW09a} H.~Barnum and A.~Wilce, Information processing in convex
  operational theories (2009) (arXiv:0908.2352).  To appear in a
  special issue of {\em Electronic Notes in Theoretical Computer
    Science}: Proceedings of QPL/DCM (Quantum Physics and Logic /
  Developments in Computational Models), Reykjavik, July 12--13, 2008.
\bibitem{BW09b} H.~Barnum and A.~Wilce, Ordered linear spaces and
  categories as frameworks for information-processing
  characterizations of quantum and classical theory (2009) (arXiv:0908.2354).
\bibitem{chsh} J.~F.~Clauser, M.~A.~Horne, A.~Shimony and R.~A.~Holt,
  Proposed experiment to test local hidden-variable theories,
  Phys. Rev. Lett. {\bf 23} 880--884 (1969).
\bibitem{D'Ariano09} G.~Chiribella, G.~M.~D'Ariano and P.~Perinotti,
  Reversible realization of physical processes in probabilistic
  theories (2009) (arXiv:0908.1583).
\bibitem{DaviesLewis} E.~B.~Davies and J.~T.~Lewis, An operational approach to
quantum probability, Comm. Math. Phys. {\bf 17} 239-260 (1970).
\bibitem{BengtssonZyckowski} I.~Bengtsson and M.~Zyckowski, {\em The
  Geometry of Quantum States}, Cambridge University Press (2004).
\bibitem{Edwards} C.~M.~Edwards, The operational approach to quantum
  probability I, Comm. Math.  Phys. {\bf 17} 207--230 (1971).
\bibitem{FR} D.~J.~Foulis and C.~H.~Randall, Empirical logic and
  tensor products, in H. Neumann, (ed.), {\em Interpretations and
    Foundations of Quantum Theory}, Bibliographisches Institut, Wissenschaftsverlag,
Mannheim (1981).
\bibitem{Gunson} J.~Gunson, On the algebraic structure of quantum
  mechanics, Communications in Mathematical Physics {\bf 6} 262--285
  (1967).
\bibitem{Hardy} L.~Hardy, A framework for probabilistic theories with
  non-fixed causal structure, J. Phys. A. {\bf 40} 3081 (2007).
\bibitem{Hadjisavvas} N.~Hadjisavvas, Properties of mixtures of
  non-orthogonal states, Lett. Math. Phys. {\bf 5} 327--332 (1981).
\bibitem{Hein} C.~A.~Hein, Entropy in Operational Statistics and
  Quantum Logic, Foundations of Physics {\bf 9} 751--78 (1979).
\bibitem{Hughston} L.~Hughston, R.~Jozsa and W.~Wooters, A complete
  classification of quantum ensembles having a given density matrix,
  Phys. Lett. A {\bf 183} 14--18 (1993).
\bibitem{Kirkpatrick} K.~A.~Kirkpatrick, The Schr\"odinger-HJW
  theorem, Found. Phys. Lett.  {\bf 19} 95--102 (2006).
\bibitem{Klay} M.~Kl\"{a}y, D.~J.~Foulis, and C.~H.~Randall, Tensor
  products and probability weights, Int. J. Theor. Phys. {\bf 26} 199--219 (1987).
\bibitem{Klay88} M.~Kl\"{a}y, Einstein-Podolsky-Rosen experiments: the
  structure of the probability space I, Foundations of Physics {\bf 1} 205--244 (1988).
\bibitem{Ludwig} G.~Ludwig, {\em An Axiomatic Basis of Quantum
  Mechanics 1, 2}, Springer-Verlag (1985, 1987).
\bibitem{Mackey} G.~Mackey, {\em Mathematical Foundations of Quantum
  Mechanics}, Benjamin (1963).
\bibitem{Petal} M.~Pawlowski, T.~Paterek, D.~Kazlikowski, V.~Scarani,
  A.~Winter and M.~\.{Z}ukowski, et al. A new physical principle:
  Information causality (2009) (arXiv:0905.2292).
\bibitem{PopescuRohrlich} S.~Popescu and D.~Rohrlich, Nonlocality as
  an axiom, Found. Phys. {\bf 24} 379 (1994).
\bibitem{SB} A.~J.~Short and J.~Barrett, Strong nonlocality: A trade-off between states and measurements, arXiv:0909.2601.
\bibitem{Schrodinger} E. Schr\"odinger, Probability relations between
  separated systems, Proceedings of the Cambridge Philosophical Society
{\bf 32} 446-452 (1936).
\bibitem{Shultz} F.~W.~Shultz, Journal of Combinatorial Theory A {\bf
  17} 317 (1974).
\bibitem{Spekkens} R.~W.~Spekkens, Evidence for the epistemic view of
  quantum states: a toy theory, Phys. Rev. A. {\bf 75} 032110 (2007).
\bibitem{Tsirelson} B.~S.~Tsirel'son, Quantum Generalizations of Bell's
  Inequality, Lett. Math. Phys. {\bf 4} 93 (1980).
\bibitem{vanDam} W.~van~Dam, Implausible consequences of superstrong
  nonlocality (2005) (arXiv:quant-ph/0501159).
\bibitem{Wilce09} A.~Wilce, Test Spaces, in D. Gabbay et al., eds.,
  {\em Handbook of Quantum Logic}, North Holland (2008).
\bibitem{WilceFI} A.~Wilce, Formalism and interpretation in quantum
  theory. To appear in Foundations of Physics.
\bibitem{Wilce90} A.~Wilce, Tensor products of frame manuals,
  Int. J. Theor. Phys. {\bf 29} 805-814 (1990).
\bibitem{Wilce05} A.~Wilce, Topological test spaces, Int. J. Theor
  Phys. {\bf 44} 1227-1238 (2005).
\bibitem{Zurek} H.~Ollivier and Wojciech~H.~Zurek, Quantum Discord: A
  Measure of the Quantumness of Correlations, Phys. Rev. Lett {\bf 88}
  017901 (2002) (arXiv:quant-ph/0105072).  V.~Vedral and L.~Henderson,
  Classical, quantum and total correlations,
  J. Phys. A-Math. Gen. {\bf 34} 6899--6905 (2001)
  (arXiv:quant-ph/0105028).

\end{thebibliography}
\end{document}